\begin{document}

\newtheorem{theorem}{Theorem}[section]
\newtheorem{definition}[theorem]{Definition}
\newtheorem{corollary}[theorem]{Corollary}
\newtheorem{lemma}[theorem]{Lemma}
\newtheorem{proposition}[theorem]{Proposition}
\newtheorem{step}[theorem]{Step}
\newtheorem{example}[theorem]{Example}
\newtheorem{remark}[theorem]{Remark}

\font\sixbb=msbm6
\font\eightbb=msbm8
\font\twelvebb=msbm10 scaled 1095

\newfam\bbfam
\textfont\bbfam=\twelvebb \scriptfont\bbfam=\eightbb
                           \scriptscriptfont\bbfam=\sixbb

\def\cP{{\mathcal{P}}}
\def\bC{{\mathbb{C}}}
\def\bN{{\mathbb{N}}}
\def\bR{{\mathbb{R}}}

\title{Nonexpansive and noncontractive mappings on the set of quantum pure states\thanks{The first author was supported by JSPS KAKENHI Grant Number 22K13934. The second author was supported by grants J1-2454 and P1-0288 from ARRS, Slovenia.}}
\author{Michiya Mori\footnote{Graduate School of Mathematical Sciences, The University of Tokyo, 3-8-1 Komaba, Meguro-ku, Tokyo, 153-8914, Japan, mmori@ms.u-tokyo.ac.jp; Interdisciplinary Theoretical and Mathematical Sciences Program (iTHEMS), RIKEN, 2-1 Hirosawa, Wako, Saitama, 351-0198, Japan}  \ \, and \ \, 
Peter \v Semrl\footnote{Institute of Mathematics, Physics, and Mechanics, Jadranska 19, SI-1000 Ljubljana, Slovenia, peter.semrl@fmf.uni-lj.si}
        }

\date{}
\maketitle

\begin{abstract}
Wigner's theorem characterizes isometries of the set of all rank one projections on a Hilbert space. In metric geometry nonexpansive maps and noncontractive maps are well studied generalizations of isometries. We show that under certain conditions Wigner symmetries can be characterized as nonexpansive or noncontractive maps on the set of all projections of rank one. The assumptions required for such characterizations are injectivity or surjectivity and they differ in the finite and the infinite-dimensional case. 
Motivated by a recently obtained optimal version of Uhlhorn's generalization of Wigner's theorem,  
we also give a description of nonexpansive maps which satisfy a condition that is much weaker than surjectivity. 
Such maps do not need to be Wigner symmetries. 
The optimality of all presented results is shown by counterexamples. 
\end{abstract}
\maketitle

\bigskip
\noindent AMS classification: 47B49, 81P05.

\bigskip
\noindent
Keywords: Wigner symmetry, pure state, nonexpansive map, noncontractive map.


\section{Introduction and statement of main results}

Throughout this note $H$ is a complex Hilbert space and $\cP(H)$ denotes the collection of rank one projections on $H$. We will always assume that $\dim H \ge 2$.
In the mathematical foundations of quantum mechanics projections of rank one
represent pure states of a quantum system, and the quantity ${\rm tr}\, (PQ)$, the trace of the product $PQ$, $P,Q \in \cP(H)$, is the so-called transition probability between two pure states. 
The famous Wigner's unitary-antiunitary theorem \cite{Wig} describes the general form of bijective transformations of $\cP(H)$ which preserve the transition probability.

We may state Wigner's theorem in terms of isometry. 
Recall that a mapping $\phi$ from a metric space $(X, d)$ into itself is called an isometry if 
$$d(\phi (x),\phi (y)) = d(x,y), \ \ \ x, y \in X.$$
In this note, we consider the distance $d(P, Q)=\|P-Q\|$ in $\cP(H)$, where $\|\cdot\|$ denotes the operator norm.
One can easily verify that $\|P-Q\| = \sqrt{1 - {\rm tr}\, (PQ)}$ holds true for every pair $P,Q \in \cP(H)$. Hence, Wigner's theorem can be formulated in the following way: Let $\phi : \cP(H) \to \cP (H)$ be a bijective map such that $$\| \phi (P) - \phi (Q) \| = \| P-Q \|, \ \ \ P,Q \in \cP (H).$$ Then there exists a unitary or an antiunitary operator $U : H \to H$ such that $$\phi (P) = UPU^\ast, \ \ \ P \in \cP (H).$$ Every map of this form will be called a Wigner symmetry.
After all, Wigner's theorem says that every surjective isometry from $\cP(H)$ onto itself is a Wigner symmetry.
More generally, it is known as a nonbijective version of Wigner's theorem that every isometry from $\cP(H)$ into itself is of the form $\phi (P) = UPU^\ast$, $P \in \cP (H)$, for some linear or conjugate-linear isometry $U : H \to H$ (see e.g.\ \cite{Ge0}).
For more detailed explanation with many other references and some recent improvements of Wigner's theorem we refer to \cite{FKKS, GeM, GeS, Mo6, Pan, PaV, Sem}.

In this note we will deal with the question what happens if we replace isometries in Wigner's theorem with maps that are nonexpansive or noncontractive. 
Recall that a mapping $\phi$ from a metric space $(X, d)$ to itself is said to be nonexpansive (resp.\ noncontractive) if 
$$d(\phi (x),\phi (y)) \leq d(x,y) \ \ \ (\text{resp.\, }d(\phi (x),\phi (y)) \geq d(x,y)), \ \ \ x, y \in X.$$
The following is a well-known fact from metric geometry \cite[Theorems 1.6.14 and 1.6.15]{BBI}.
\begin{theorem}\label{metric}
Let $(X, d)$ be a compact metric space. 
\begin{enumerate}
\item A noncontractive map from $X$ into itself is a surjective isometry. 
\item A surjective nonexpansive map from $X$ onto itself is an isometry. 
\end{enumerate}
\end{theorem}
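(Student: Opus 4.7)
The plan is to prove part (1) first, and then derive part (2) by a short argument using a set-theoretic section of the given surjection.

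For part (1), fix $x,y \in X$ and consider the sequence $(\phi^n(x), \phi^n(y))_{n \ge 0}$ in the compact product $X \times X$. By sequential compactness, pick $n_k \to \infty$ such that this sequence converges; in particular $(\phi^{n_k}(x))_k$ and $(\phi^{n_k}(y))_k$ are Cauchy. The crucial observation is that iterating the noncontractive property $n_k$ times gives, for $n_l > n_k$,
\[
d(\phi^{n_l}(x), \phi^{n_k}(x)) \ge d(\phi^{n_l - n_k}(x), x),
\]
and likewise for $y$. Hence the differences $n_l - n_k$ can be arranged (by a diagonal argument) to produce a single sequence $m_j \to \infty$ with $\phi^{m_j}(x) \to x$ and $\phi^{m_j}(y) \to y$ simultaneously. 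Continuity of the distance function then yields $d(\phi^{m_j}(x), \phi^{m_j}(y)) \to d(x,y)$. But the noncontractive property applied $m_j - 1$ times to the pair $\phi(x), \phi(y)$ gives $d(\phi^{m_j}(x), \phi^{m_j}(y)) \ge d(\phi(x), \phi(y))$, so in the limit $d(x,y) \ge d(\phi(x), \phi(y))$. Combined with the opposite inequality from noncontractiveness, this forces $d(\phi(x), \phi(y)) = d(x,y)$, i.e., $\phi$ is an isometry. Surjectivity then follows because each $x$ is a limit of $\phi^{m_j}(x) \in \phi(X)$, so $\phi(X)$ is dense; as $\phi$ is now known to be an isometry, hence continuous, $\phi(X)$ is compact, hence closed, hence all of $X$.

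For part (2), invoke the axiom of choice to pick any section $\psi \colon X \to X$ with $\phi \circ \psi = \mathrm{id}_X$. Then
\[
d(x,y) = d(\phi(\psi(x)), \phi(\psi(y))) \le d(\psi(x), \psi(y)),
\]
so $\psi$ is a noncontractive self-map of $X$. By part (1), $\psi$ is a surjective isometry, and tracing back through the above inequality gives $d(\psi(x), \psi(y)) = d(x,y)$ for all $x,y$; surjectivity of $\psi$ lets us rewrite an arbitrary pair $(u,v) \in X \times X$ as $(\psi(x), \psi(y))$ and conclude $d(\phi(u), \phi(v)) = d(u,v)$.

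The main delicate point lies in part (1): extracting the simultaneous recurrence subsequence $m_j$ with $\phi^{m_j}(x) \to x$ \emph{and} $\phi^{m_j}(y) \to y$. One must avoid applying $\phi$ to limits, because $\phi$ is not assumed to be continuous at the outset; continuity of $\phi$ only emerges at the end, as a consequence of being an isometry. Correspondingly, in part (2) the section $\psi$ is only a set-theoretic map, and it is essential that the proof of part (1) does not require continuity of its input.
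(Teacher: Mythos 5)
Your proof is correct; note that the paper itself does not prove this theorem but simply cites \cite[Theorems 1.6.14 and 1.6.15]{BBI}, and your argument is essentially the standard one given there: the compactness/recurrence argument for part (1), followed by the reduction of part (2) to part (1) via a set-theoretic section of the surjection. You also correctly isolate the one genuinely delicate point, namely that part (1) must be proved without assuming continuity of the map (only continuity of the metric is used), since the section in part (2) is merely set-theoretic.
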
 

\begin{corollary}\label{known}
Assume that $\dim H<\infty$. A mapping $\phi\colon \cP(H)\to \cP(H)$ is a Wigner symmetry if one of the following holds.
\begin{enumerate}
\item $\phi$ is noncontractive. 
\item $\phi$ is surjective and nonexpansive. 
\item $\phi$ is injective and nonexpansive. 
\end{enumerate}
\end{corollary}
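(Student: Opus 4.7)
\medskip

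\noindent\textbf{Proof proposal.} The plan is to reduce each of the three statements to an application of Theorem \ref{metric} together with the surjective (or non-surjective) version of Wigner's theorem stated in the Introduction. The key background fact that makes the reduction work is that, in the finite-dimensional setting, $\cP(H)$ is a compact metric space (it is a norm-closed bounded subset of a finite-dimensional operator space, and in fact is homeomorphic to the complex projective space $\mathbb{CP}^{\dim H-1}$, which is a compact connected manifold without boundary).

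Parts (1) and (2) are essentially immediate. For (1), Theorem \ref{metric}(1) applied to the compact metric space $\cP(H)$ shows that a noncontractive self-map of $\cP(H)$ is automatically a surjective isometry, and then Wigner's theorem yields the desired conclusion. For (2), Theorem \ref{metric}(2) tells us that a surjective nonexpansive self-map of the compact space $\cP(H)$ is an isometry, and again Wigner's theorem applies.

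The subtle case is (3): we must deduce a Wigner symmetry from mere injectivity plus nonexpansiveness, without assuming surjectivity. The plan is to show that in finite dimensions injectivity plus continuity of $\phi$ already forces surjectivity, after which (2) finishes the argument. Since $\phi$ is nonexpansive it is in particular continuous. Viewing $\cP(H)$ as a compact connected topological manifold without boundary, Brouwer's invariance of domain theorem implies that the injective continuous image $\phi(\cP(H))$ is open in $\cP(H)$. On the other hand, $\phi(\cP(H))$ is the continuous image of a compact space, hence compact, hence closed in $\cP(H)$. Since $\cP(H)$ is connected and $\phi(\cP(H))$ is a nonempty clopen subset, we conclude $\phi(\cP(H)) = \cP(H)$, so $\phi$ is surjective, and (2) applies.

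The only real obstacle is (3), and it is settled by the topological observation above; the argument genuinely uses finite-dimensionality through the manifold structure of $\cP(H)$, in contrast to (1) and (2) where only compactness is used. (This is consistent with the authors' remark in the abstract that the assumptions needed differ between the finite and infinite-dimensional cases.)
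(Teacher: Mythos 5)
Your proposal is correct and follows essentially the same route as the paper's proof: compactness of $\cP(H)$ plus Theorem \ref{metric} and Wigner's theorem for parts (1) and (2), and invariance of domain combined with compactness (hence closedness) of the image and connectedness of $\cP(H)$ to force surjectivity in part (3). The only cosmetic difference is that you spell out the clopen-in-a-connected-space argument that the paper leaves implicit.
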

\begin{proof}
If $\dim H<\infty$, then ${\cal P}(H)$ is compact. 
Therefore, if $\phi$ satisfies one of the first two conditions, then Theorem \ref{metric} and Wigner's theorem imply that $\phi$ is a Wigner symmetry. 
To consider the third condition, let $\phi\colon \cP(H)\to \cP(H)$ be an injective nonexpansive map. 
Then $\phi$ is an injective continuous map from a compact connected manifold ${\cal P}(H)$ into itself. 
By the invariance of domain its image is open. 
By the continuity, its image is compact. Hence, $\phi$ is surjective 
and satisfies the second condition, so it is a Wigner symmetry. 
\end{proof}
We study nonexpansive or noncontractive maps of $\cP(H)$ under various additional assumptions, including the case of $\dim H=\infty$.
We start with a result on noncontractive maps.

\begin{theorem}\label{plasticity3}
Let $\phi : {\cal P} (H) \to {\cal P} (H)$ be a surjective map such that 
$$
\| \phi (P) - \phi (Q) \| \ge \| P - Q \|
$$
for every pair $P,Q \in {\cal P}(H)$. Then $\phi$ is a Wigner symmetry. 
\end{theorem}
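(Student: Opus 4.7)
I would begin with three easy observations. First, $\phi$ is injective: $\phi(P)=\phi(Q)$ forces $0\ge\|P-Q\|$, hence $P=Q$, so combined with the hypothesis $\phi$ is a bijection. Second, every pair in $\cP(H)$ has distance at most $1$, with equality iff the two projections are orthogonal, so noncontraction yields the forward preservation of orthogonality: $P\perp Q\Rightarrow\phi(P)\perp\phi(Q)$. Third, the inverse $\phi^{-1}$ is nonexpansive, and in particular continuous.

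When $\dim H<\infty$, Corollary~\ref{known}(1) already gives the conclusion, so the substantive case is $\dim H=\infty$, in which $\cP(H)$ is not compact and Theorem~\ref{metric} cannot be applied to the whole space. My strategy is localization to finite-dimensional subspaces. For any finite-dimensional $V\subseteq H$, the set $\cP(V)\subseteq\cP(H)$ is compact, and by continuity of $\phi^{-1}$ the preimage $K_V:=\phi^{-1}(\cP(V))\subseteq\cP(H)$ is also compact. The restriction $\phi|_{K_V}\colon K_V\to\cP(V)$ is a noncontractive bijection whose inverse, being a continuous bijection from the compact Hausdorff space $\cP(V)$ to $K_V$, is a homeomorphism; hence so is $\phi|_{K_V}$. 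The core step is to upgrade $\phi|_{K_V}$ from merely noncontractive to isometric. Once this is done, for arbitrary $P,Q\in\cP(H)$ I would take $V$ to be the (at most two-dimensional) span of the ranges of $\phi(P)$ and $\phi(Q)$; then $P,Q\in K_V$, and the isometry of $\phi|_{K_V}$ gives $\|\phi(P)-\phi(Q)\|=\|P-Q\|$. Thus $\phi$ is a global isometry of $\cP(H)$, and the non-bijective form of Wigner's theorem recalled in the introduction produces a linear or conjugate-linear isometry $U\colon H\to H$ with $\phi(P)=UPU^{\ast}$; surjectivity of $\phi$ then forces $U$ to be unitary or antiunitary, so $\phi$ is a Wigner symmetry.

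\textbf{Main obstacle.} The heart of the proof is the isometry claim for $\phi|_{K_V}$. Theorem~\ref{metric} is stated for noncontractive self-maps of a compact metric space, whereas $\phi|_{K_V}$ has distinct source and target, so it does not apply verbatim. What must be exploited is an interplay between the rigid intrinsic geometry of the compact target $\cP(V)\cong \mathbb{CP}^{\dim V-1}$ (its diameter $1$ and its transitive symmetry group, for instance) and the continuous nonexpansive inverse $(\phi|_{K_V})^{-1}\colon\cP(V)\to K_V$. Establishing this rigidity inside the non-compact ambient space $\cP(H)$, without any global compactness to fall back on, is the technically delicate step; once it is carried out, the reduction to Wigner's theorem is routine.
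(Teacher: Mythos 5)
Your preliminary observations are all correct: $\phi$ is injective (hence bijective), it preserves orthogonality in the forward direction, $\phi^{-1}$ is nonexpansive, and the finite-dimensional case follows from Corollary~\ref{known}. But the proposal stops exactly at its crux, and the gap is genuine. The claim that $\phi|_{K_V}\colon K_V\to\cP(V)$ is an isometry is never established, and it cannot be deduced from the properties you retain about this restriction: a noncontractive bijection between two \emph{distinct} compact metric spaces need not be an isometry (consider $x\mapsto 2x$ from $[0,1]$ onto $[0,2]$, whose inverse is nonexpansive and a homeomorphism). Theorem~\ref{metric} offers no substitute because, as you note, $\phi|_{K_V}$ is not a self-map, and you cannot iterate it since $\phi$ does not map $K_V$ into itself. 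The homogeneity and diameter of $\cP(V)\cong\mathbb{CP}^{\dim V-1}$ do not resolve this either, because a priori $K_V$ is just some compact subset of the non-compact space $\cP(H)$ with no identified metric relation to $\cP(V)$; soft invariants (compactness, homeomorphism type, Hausdorff-measure comparisons, which here all point in the same, unhelpful direction) are consistent with $\phi|_{K_V}$ strictly expanding some distances. That nonexpansive maps of $\cP(H)$ can genuinely fail to be isometries is shown by the paper's Example~\ref{ef} and by the second alternative of Proposition~\ref{Phicor}; the rigidity you need is therefore not a routine consequence of compactness, but requires the structure theory of nonexpansive maps together with a topological use of surjectivity. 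In infinite dimensions this ``main obstacle'' is essentially the entire content of the theorem, and nothing in your outline engages it.

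Ironically, your observations already contain a complete proof, namely the paper's: since $\phi$ is a noncontractive bijection, $\phi^{-1}$ is a surjective nonexpansive self-map of $\cP(H)$, hence a Wigner symmetry by Theorem~\ref{plasticity}, whose proof in the paper (via Proposition~\ref{Phicor} and the argument that the $\Phi$-type alternative is never surjective) is independent of Theorem~\ref{plasticity3}, so there is no circularity. Writing $\phi^{-1}(P)=UPU^\ast$ with $U$ unitary or antiunitary gives $\phi(Q)=U^\ast QU$ for all $Q\in\cP(H)$, so $\phi$ is a Wigner symmetry. If instead you insist on a self-contained localization argument, you would have to reprove the substance of Theorem~\ref{plasticity}; that is where the rigidity missing from your plan actually comes from.
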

Looking at the first item of Corollary \ref{known}, in the finite-dimensional case we see that the same conclusion holds without the surjectivity assumption. However, in the infinite-dimensional case the surjectivity assumption is essential (Example \ref{ce1}).

We next consider nonexpansive maps.
\begin{theorem}\label{plasticity}
Let 
$\phi : {\cal P} (H) \to {\cal P} (H)$ be a surjective map such that 
$$
\| \phi (P) - \phi (Q) \| \le \| P - Q \|
$$
for every pair $P,Q \in {\cal P}(H)$. Then $\phi$ is a Wigner symmetry.
\end{theorem}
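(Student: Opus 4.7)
The plan is to reduce to Theorem~\ref{plasticity3} by showing $\phi$ is a bijection: its inverse is then a surjective noncontractive self-map of $\mathcal{P}(H)$, which is a Wigner symmetry by Theorem~\ref{plasticity3}, so $\phi$ is as well. The finite-dimensional case is already covered by Corollary~\ref{known}(2), so I assume $\dim H=\infty$ and focus on proving that $\phi$ is injective.

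Using the axiom of choice, pick a section $\psi\colon\mathcal{P}(H)\to\mathcal{P}(H)$ of $\phi$, so that $\phi\circ\psi=\mathrm{id}$. Injectivity of $\psi$ is immediate, and the nonexpansiveness of $\phi$ gives
\[
\|\psi(R)-\psi(S)\|\ge\|\phi(\psi(R))-\phi(\psi(S))\|=\|R-S\|,
\]
so $\psi$ is noncontractive. Since $\mathcal{P}(H)$ has diameter $1$, this forces $\|\psi(R)-\psi(S)\|=1$ whenever $\|R-S\|=1$, and hence $\psi$ preserves orthogonality in the forward direction. An Uhlhorn-type theorem for injective orthogonality-preserving maps (valid for $\dim H\ge 3$) then yields a linear or conjugate-linear isometry $V\colon H\to H$ with $\psi(P)=VPV^*$ for every $P\in\mathcal{P}(H)$. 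Once $V$ is known to be surjective, $\psi$ is a bijection and $\phi=\psi^{-1}$ is the inverse of a Wigner symmetry, finishing the proof.

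The main obstacle is proving that $V$ is surjective, and this is where nonexpansiveness of $\phi$, not merely its surjectivity, enters in an essential way. Suppose for contradiction that $K:=(VH)^\perp\ne\{0\}$, pick a unit vector $k\in K$ and any orthonormal pair $\xi_1,\xi_2\in H$, and set $u_i:=(V\xi_i+k)/\sqrt{2}$ together with $T_i:=|u_i\rangle\langle u_i|\in\mathcal{P}(H)$. A short calculation that exploits $k\perp VH$ gives
\[
\|T_i-\psi(R)\|^2=1-\tfrac{1}{2}|\langle\xi_i,\eta\rangle|^2
\]
for every $R=|\eta\rangle\langle\eta|\in\mathcal{P}(H)$. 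Combined with $\phi(\psi(R))=R$ and the nonexpansiveness of $\phi$, this translates into the operator inequality $\phi(T_i)\ge\tfrac{1}{2}|\xi_i\rangle\langle\xi_i|$, which for a rank-one projection $\phi(T_i)$ forces $\phi(T_i)=|\xi_i\rangle\langle\xi_i|$. But then $\|\phi(T_1)-\phi(T_2)\|=1>\sqrt{3}/2=\|T_1-T_2\|$, directly contradicting the nonexpansiveness of $\phi$. Hence $K=\{0\}$, so $V$ is surjective and the argument closes.
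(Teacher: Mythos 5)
Your argument has a fatal gap at its central step. You establish that the section $\psi$ is injective, noncontractive, and preserves orthogonality in the forward direction, and then invoke an ``Uhlhorn-type theorem for injective orthogonality-preserving maps'' to conclude $\psi(P)=VPV^*$ for some linear or conjugate-linear isometry $V$. No such theorem exists when $\dim H=\infty$, which is precisely the case you restricted to. Indeed, Example \ref{ce1} of this very paper exhibits an injective noncontractive map of $\mathcal{P}(H)$ (noncontractive maps are automatically injective and orthogonality preserving, so it has exactly the properties you derive for $\psi$) that is \emph{not} of the form $VPV^*$ for any linear or conjugate-linear isometry $V$. Nonbijective Uhlhorn-type results require genuinely more: the optimal version \cite{Sem} needs $H$ separable \emph{and} a COSP that is mapped onto a COSP, and you have verified neither for $\psi$ (the image of a COSP under $\psi$ is an OSP, but nothing forces it to be complete; the theorem also makes no separability assumption on $H$). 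Everything after this point, including your closing contradiction that forces $V$ to be surjective --- which, taken on its own, is a correct and rather elegant computation --- rests on this unsupported structural claim.

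There is also a logical circularity you could not have seen blind: the paper proves Theorem \ref{plasticity3} \emph{as a corollary of} Theorem \ref{plasticity} (a surjective noncontractive map is bijective, and its inverse is surjective and nonexpansive). So reducing Theorem \ref{plasticity} to Theorem \ref{plasticity3} cannot be spliced into the paper's logical structure without supplying an independent proof of Theorem \ref{plasticity3}, which is essentially as hard as the original problem. The paper's actual route is different: surjectivity puts a COSP in the image of $\phi$, Proposition \ref{Phicor} then yields the dichotomy that $\phi$ is either a Wigner symmetry or of the ``entrywise absolute value'' form $V\Phi(UPU^*)V^*$ on $\mathcal{P}(K)$, and a topological argument (the preimage of a small ball around a well-chosen projection would have to lie in $\mathcal{P}(K^\perp)$, whose interior is empty, so the preimage is empty) shows maps of the second form are never surjective. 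That dichotomy-plus-exclusion is the substance that your section-and-Uhlhorn shortcut does not replace.
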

In contrast, we will see in Example \ref{ef} that an injective nonexpansive map of $\cP(H)$ can be far from a Wigner symmetry if $\dim H=\infty$. 

For $P,Q \in {\cal P}(H)$ we write $P \perp Q$ if $PQ=0$ which is equivalent to $\| P - Q\| = 1$. For every vector $x \in H$ of norm one we denote by $P_x$ the rank one projection onto the linear span of $x$. We will say that a subset $\{ P_\alpha \, : \, \alpha \in {\cal J} \} \subset {\cal P}(H)$ is  an  
orthogonal system of projections of rank one, OSP,
if $P_\alpha \perp P_\beta$ whenever $\alpha \not=\beta$, and we will say that it is  a complete  
orthogonal system of projections of rank one, COSP,
if it is an OSP and there is no rank one projection $Q$ that is orthogonal to each $P_\alpha$. If $P_\alpha = P_{x_\alpha}$ for some unit vectors $x_\alpha$, $\alpha \in {\cal J}$, then  $\{ P_\alpha \, : \, \alpha \in {\cal J} \} \subset {\cal P}(H)$ is a COSP if and only if $\{ x_\alpha \, : \, \alpha \in {\cal J} \}$ is an orthonormal basis of $H$.

Under the additional assumption $\dim H\geq 3$, Uhlhorn's improvement of Wigner's theorem \cite{Uhl} states that every bijective map $\phi : {\cal P} (H) \to {\cal P} (H)$ with the property that for every pair $P,Q \in \cP (H)$ we have $P \perp Q \iff \phi (P) \perp \phi (Q)$ is a Wigner symmetry. This result can be further improved. A recently obtained optimal version \cite{Sem} reads as follows: Assume that $H$ is separable and $\phi : {\cal P} (H) \to {\cal P} (H)$ a map such that for every pair $P,Q \in \cP (H)$ we have $P \perp Q \Rightarrow \phi (P) \perp \phi (Q)$. Suppose that there exists a COSP $\{ P_1, P_2, \ldots \} \subset \cP (H)$ such that 
$\{ \phi (P_1 ), \phi (P_2 ), \ldots \} \subset \cP (H)$ is a COSP. Then $\phi$ is a Wigner symmetry. In particular, if $3\leq \dim H<\infty$, then every map
$\phi : {\cal P} (H) \to {\cal P} (H)$ with the property that for every pair $P,Q \in \cP (H)$ we have $P \perp Q \Rightarrow \phi (P) \perp \phi (Q)$ is a Wigner symmetry.

Note that every noncontractive map $\phi : {\cal P} (H) \to {\cal P} (H)$ satisfies $P \perp Q \Rightarrow \phi (P) \perp \phi (Q)$, $P,Q \in \cP (H)$. 
It follows that in the separable case a noncontractive map $\phi$ on $\cP (H)$ with the property that there is a COSP that is mapped by $\phi$ onto some COSP must be a Wigner symmetry. The situation becomes much more interesting when we consider nonexpansive maps, as described below.

We first choose and fix an orthonormal basis $\{e_\alpha\,:\, \alpha\in {\cal J}\}$ in $H$. 
For each projection $P\in \cP(H)$ onto $\bC(\sum_{\alpha\in {\cal J}}c_\alpha e_\alpha)$ we define $\Phi(P)$ to be the projection onto $\bC(\sum_{\alpha\in {\cal J}}|c_\alpha|e_\alpha)$. Clearly, $\Phi$ is well-defined, and it maps every element of COSP $\{ P_{e_\alpha}\, : \, \alpha\in {\cal J}\}$ to itself. We claim that $\Phi : \cP (H) \to \cP (H)$ is a nonexpansive map. Equivalently, we need to show that
$$
{\rm tr}\, (\Phi (P)  \Phi (Q))  \ge {\rm tr}\, ( P  Q ), \ \ \ P,Q \in  {\cal P}(H).
$$
This can be easily checked by the inequality 
\[
\sqrt{{\rm tr}\, (P_vP_w)}= \left|\sum_{\alpha \in {\cal J}}v_\alpha\overline{w_\alpha}\right|\leq \sum_{\alpha \in {\cal J}}|v_\alpha||w_\alpha|=\sqrt{{\rm tr}\, (\Phi(P_v)\Phi(P_w))}
\] 
for each pair of unit vectors $v=\sum_{\alpha \in {\cal J}} v_\alpha e_\alpha, w=\sum_{\alpha \in {\cal J}}w_\alpha e_\alpha\in H$.
Now we state the main result of this note. 

\begin{theorem}\label{Phiinfin}
Assume that $\dim H \ge 3$.
Let $\phi\colon \cP(H)\to \cP(H)$ be a mapping satisfying 
\[
\|\phi(P)-\phi(Q)\|\leq \|P-Q\|,\quad P,Q\in \cP(H),
\]
and assume that there exists a COSP $\{ Q_\alpha \, : \, \alpha \in {\cal J} \} \subset \cP (H)$ such that $\{ \phi (Q_\alpha) \, : \, \alpha \in {\cal J} \}$ is a COSP in $\cP (H)$.
Then either $\phi$ is a Wigner symmetry, or there exist unitary operators $U,V : H \to H$ such that
\begin{equation}\label{mmnk}
\phi (P)= V \Phi (UPU^\ast) V^\ast , \ \ \ P \in {\cal P}(H) .
\end{equation}
\end{theorem}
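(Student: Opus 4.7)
My plan begins with a standard normalization. Since $\{Q_\alpha\}$ and $\{\phi(Q_\alpha)\}$ are COSPs corresponding to orthonormal bases of $H$, composing $\phi$ with suitable unitaries on the left and right reduces to the case $\phi(P_{e_\alpha}) = P_{e_\alpha}$ for every $\alpha$, where $\{e_\alpha\}$ is the basis defining $\Phi$. Applying nonexpansivity against each $P_{e_\alpha}$ gives ${\rm tr}\,(\phi(P)P_{e_\alpha}) \ge {\rm tr}\,(PP_{e_\alpha})$; writing $P = P_v$ and $\phi(P) = P_w$ with coordinates $v_\alpha, w_\alpha$, and summing over $\alpha$, forces $|w_\alpha| = |v_\alpha|$ for every $\alpha$. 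In particular $\phi$ restricts to a nonexpansive self-map of $\cP(H_{\alpha\beta})$ for each $2$-dimensional slice $H_{\alpha\beta} = \bC e_\alpha + \bC e_\beta$. Parametrizing such states by the ratio $c = v_\beta/v_\alpha$ and testing nonexpansivity on pairs of states sharing modulus profiles shows that the induced map has the form $c \mapsto |c|e^{i\psi_{\alpha\beta}(\arg c)}$ for some $1$-Lipschitz function $\psi_{\alpha\beta}\colon S^1 \to S^1$.

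The heart of the argument is a three-dimensional compatibility condition. Fix distinct indices $\alpha, \beta, \gamma$ and a generic vector $v$ with $v_\alpha, v_\beta, v_\gamma \ne 0$, normalized so that $v_\alpha > 0$; write $\phi(P_v) = P_w$ with $w_\alpha > 0$. Testing nonexpansivity against $2$-supported vectors in $H_{\alpha\beta}$ and $H_{\alpha\gamma}$ with phases chosen to saturate the inequality forces $\arg w_\beta = \psi_{\alpha\beta}(\arg v_\beta)$ and $\arg w_\gamma = \psi_{\alpha\gamma}(\arg v_\gamma)$. A further test against a vector in $H_{\beta\gamma}$ whose phase is tuned to saturate then yields the functional identity
\[
\psi_{\beta\gamma}(\theta_\gamma - \theta_\beta) = \psi_{\alpha\gamma}(\theta_\gamma) - \psi_{\alpha\beta}(\theta_\beta), \qquad \theta_\beta, \theta_\gamma \in S^1.
\]
Setting one variable to zero reduces this to a Cauchy-type equation; combined with continuity from the $1$-Lipschitz property, each $\psi_{\alpha\beta}$ must be a continuous group homomorphism of $S^1$ up to a constant, hence of the form $\theta \mapsto n\theta + b$ with $n \in \mathbb Z$, and the $1$-Lipschitz constraint restricts $n \in \{-1, 0, 1\}$. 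The same identity forces all three $\psi$'s in any triple to share the value of $n$, and since for $\dim H \ge 3$ any two $2$D slices are linked by a chain of triples, a single $n$ prevails throughout the index set.

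The shift constants $b_{\alpha\beta}$ then satisfy the cocycle identity $b_{\alpha\gamma} = b_{\alpha\beta} + b_{\beta\gamma}$ and admit a potential $\beta_\alpha$ with $b_{\alpha\beta} = \beta_\beta - \beta_\alpha$. Setting $D = {\rm diag}(e^{i\beta_\alpha})$, the case $n=1$ gives $\phi(P_v) = P_{Dv}$, a unitary Wigner symmetry; the case $n=-1$ gives $\phi(P_v) = P_{D\bar v}$, an antiunitary Wigner symmetry; and the case $n=0$ gives $\phi(P_v) = P_{D|v|} = D\Phi(P_v)D^*$, which is of the form $V\Phi(UPU^*)V^*$ with $U = I$. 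Undoing the initial normalization then recovers the stated dichotomy. The main obstacle will be the functional-equation step: identifying test vectors that saturate the nonexpansive inequality, maintaining consistent phase conventions across distinct $2$D slices, and confirming that the Cauchy-type equation together with $1$-Lipschitzness pins down exactly the three forms; some care is also required for vectors whose support does not contain a pre-chosen reference coordinate.
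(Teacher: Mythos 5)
Your proposal is sound, and while its engine is the same as the paper's --- phase functions on two\-dimensional coordinate slices, a three\-index functional equation, and the trichotomy identity/conjugation/constant for circle maps (your $n\in\{-1,0,1\}$ is exactly the paper's Corollary \ref{homo}) --- the way you globalize is genuinely different. The paper proves the finite\-dimensional case (Proposition \ref{diagonal}) by induction on dimension, propagating the two\-dimensional structure with the block\-matrix Lemma \ref{inclusion} and Remark \ref{additional}, and then reaches infinite dimensions (Proposition \ref{Phicor}) through a compatibility\-and\-limit argument over finite subsets of the index set containing three fixed indices, finishing with continuity of $\phi$. You instead exploit the full COSP hypothesis at the outset: after normalizing so that $\phi$ fixes every $P_{e_\alpha}$, the termwise inequality ${\rm tr}\,(\phi(P)P_{e_\alpha})\ge {\rm tr}\,(PP_{e_\alpha})$ plus Parseval forces $|w_\alpha|=|v_\alpha|$ for all $\alpha$ simultaneously (writing $\phi(P_v)=P_w$), and saturation tests against $2$-supported vectors then pin down all relative phases of $\phi(P_v)$ for an arbitrary vector $v$ directly; the cocycle identity $b_{\alpha\gamma}=b_{\alpha\beta}+b_{\beta\gamma}$ and a potential assemble the diagonal unitary. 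This eliminates both the induction and the limiting argument, which is a real simplification for the theorem as stated. What the paper's longer route buys is the strictly stronger Proposition \ref{Phicor}: there only the image of $\phi$ is assumed to contain a COSP, the preimages form an OSP spanning a possibly proper subspace $K$, and one must rule out $K\ne H$ in the Wigner case; your opening normalization is simply unavailable at that level of generality.

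One step needs tightening. In the two\-dimensional slice, testing pairs of states with the \emph{same} modulus profile only shows that each fixed\-modulus phase map is $1$-Lipschitz; it does not show that the phase of the image is independent of the modulus, which is what the formula $c\mapsto|c|e^{i\psi_{\alpha\beta}(\arg c)}$ asserts (this independence is the content of the paper's Proposition \ref{snem}). The fix is the cross\-modulus saturation test you already use in the three\-index step: once moduli are preserved, writing $\phi(P_u)=P_{u'}$ for $u\in H_{\alpha\beta}$, the triangle inequality gives $|\langle u',w\rangle|\le |u_\alpha|\,|v_\alpha|+|u_\beta|\,|v_\beta|$, while nonexpansivity gives $|\langle u',w\rangle|\ge|\langle u,v\rangle|$, and when the relative phases of $u$ and $v$ agree the latter equals $|u_\alpha|\,|v_\alpha|+|u_\beta|\,|v_\beta|$; equality throughout forces the relative phases of $u'$ and $w$ to agree. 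So the tool is present in your outline; it just has to be invoked at this earlier point as well.
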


In fact, we will prove a stronger result (Proposition \ref{Phicor}) on an arbitrary nonexpansive map $\phi\colon \cP(H)\to \cP(H)$ whose image contains a COSP.
If $H$ is finite-dimensional, $\dim H = n$, then we identify $H$ with $\mathbb{C}^n$ and ${\cal P}(\mathbb{C}^n)$ with the set of all idempotent hermitian rank one $n \times n$ matrices. 
Then the map $\Phi\colon \cP(\bC^n)\to \cP(\bC^n)$ with respect to the standard orthonormal basis of $\bC^n$ is given by the ``entrywise absolute value'' 
\begin{equation}\label{ab}
\Phi(P)=[\, |p_{ij}|\, ]_{1\leq i,j\leq n} \ \ \text{for}\ \ P=[p_{ij}]_{1\leq i,j\leq n}\in \cP(\bC^n).
\end{equation} 
Every nonexpansive map $\phi\colon \cP(\bC^n)\to \cP(\bC^n)$ such that $\phi(\cP(\bC^n))$ contains a COSP is either a Wigner symmetry or of the form \eqref{mmnk}.
It is somewhat surprising that the assumption $n \ge 3$ is indispensable. The general form of nonexpansive maps $\phi$ on ${\cal P}(\mathbb{C}^2)$ 
such that $\phi(\cP(\bC^2))$ contains a COSP
 is described in Proposition \ref{dimensiontwo}. 

The next section will be devoted to some auxiliary results. In the third section we will present proofs of the main results. The last section
will be devoted to counterexamples showing the optimality of our main theorems.

\section{Preliminary results}
In what follows we occasionally use the following elementary facts. 
They are well-known and easily verified, so we omit the proofs. 
\begin{lemma}\label{2x2}
A $2\times 2$ projection of rank one is of the form 
$$
\begin{bmatrix} p & z\sqrt{p (1-p)} \\  \overline{z}\sqrt{p (1-p)} & 1-p \end{bmatrix}
$$
for some real number $p$, $0 \leq p \leq1$, and complex number $z$ of modulus one.
\end{lemma}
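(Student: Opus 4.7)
The plan is to parametrize a $2\times 2$ rank-one projection by its two diagonal entries and one off-diagonal entry, and then impose the projection axioms. Writing $P = \begin{bmatrix} p & b \\ \bar b & q \end{bmatrix}$ with $p, q \in \mathbb{R}$ and $b \in \mathbb{C}$ (forced by the hermiticity condition $P = P^*$), I would first use the fact that a rank-one projection has spectrum $\{0,1\}$ to conclude $\mathrm{tr}(P) = p + q = 1$ and $\det(P) = pq - |b|^2 = 0$. Substituting $q = 1 - p$ turns the determinant equation into $|b|^2 = p(1-p)$.

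Next I would observe that $p \in [0,1]$: since $P \geq 0$ as an orthogonal projection, the diagonal entries $p = \langle Pe_1, e_1\rangle$ and $1-p = \langle Pe_2, e_2\rangle$ must both be non-negative. The equation $|b|^2 = p(1-p)$ then has a solution of the form $b = z\sqrt{p(1-p)}$ for some $z \in \mathbb{C}$ with $|z| = 1$. For the converse, a direct calculation shows that every matrix of the displayed shape is hermitian with trace $1$ and determinant $0$, hence has eigenvalues $0$ and $1$, and is therefore automatically a rank-one projection.

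There is essentially no obstacle here; the statement is a routine consequence of the definition. The only mild bookkeeping issue is the boundary case $p \in \{0,1\}$, where $|b|^2 = p(1-p) = 0$ forces $b = 0$ and $z$ is no longer determined by $P$. The statement remains correct as written, since in these degenerate cases the displayed matrix equals $P$ for \emph{any} unimodular $z$, so one can simply choose, say, $z = 1$.
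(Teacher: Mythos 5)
Your proof is correct: the paper explicitly omits the proof of this lemma as ``well-known and easily verified,'' and your argument (hermiticity plus trace $1$ and determinant $0$ to get $q=1-p$ and $|b|^2=p(1-p)$, positivity to get $p\in[0,1]$, and the converse check) is precisely the routine verification the authors had in mind. Your remark on the degenerate cases $p\in\{0,1\}$, where $z$ is not unique but can be chosen arbitrarily, is a correct and worthwhile point of care.
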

\begin{lemma}\label{cos}
Let $v, w\in H$ be unit vectors. Then the projections $P_v$ and $P_w$ in $\cP(H)$ onto $\bC v$ and $\bC w$, respectively, satisfy
$$ {\rm tr}\, (P_vP_w) = |\langle v,w\rangle|^2,\ \ \  \|P_v-P_w\| = \sqrt{1-|\langle v,w\rangle|^2}.$$
\end{lemma}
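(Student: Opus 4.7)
The lemma consists of two independent identities about rank-one projections; my plan is to handle them separately with short, direct computations.

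For the trace identity, I would exploit cyclicity of the trace together with the idempotence $P_v^2 = P_v$. A quick check shows that $P_v P_w P_v$ sends any $x \in H$ to $|\langle v, w\rangle|^2 \langle x, v\rangle v$, so $P_v P_w P_v = |\langle v, w\rangle|^2 P_v$. Since ${\rm tr}\,(P_v) = 1$, cyclicity yields
${\rm tr}\,(P_v P_w) = {\rm tr}\,(P_v^2 P_w) = {\rm tr}\,(P_v P_w P_v) = |\langle v, w\rangle|^2$. Equivalently, one can evaluate the trace in an orthonormal basis of $H$ whose first vector is $v$, in which case only the $(1,1)$-entry of the matrix of $P_v P_w$ is nonzero.

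For the norm identity, the key observation is that $P_v - P_w$ annihilates the orthogonal complement of $\mathrm{span}\{v, w\}$, so the operator norm equals the norm of the restriction to this at most two-dimensional subspace. If $v$ and $w$ are linearly dependent, both sides of the identity vanish, so I would assume independence and write $w = \alpha v + \beta u$, where $\alpha = \langle w, v\rangle$, $u$ is a unit vector orthogonal to $v$, and $|\beta|^2 = 1 - |\alpha|^2$. In the orthonormal basis $\{v, u\}$, the matrix of $P_v$ is $\mathrm{diag}(1, 0)$, while Lemma \ref{2x2} (or direct computation) identifies the matrix of $P_w$ as having diagonal entries $|\alpha|^2, |\beta|^2$ and off-diagonal entry $\alpha \bar\beta$. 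The difference $P_v - P_w$ is then a self-adjoint $2\times 2$ matrix with zero trace and determinant $-|\beta|^2$, so its eigenvalues are $\pm|\beta|$ and its norm equals $|\beta| = \sqrt{1 - |\langle v, w\rangle|^2}$.

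The argument is entirely routine; the only bookkeeping is the separation of the degenerate case $v \parallel w$ from the two-dimensional reduction in the norm calculation, so there is no real obstacle.
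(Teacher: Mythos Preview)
Your proof is correct. The paper actually omits the proof of this lemma, remarking only that it is well-known and easily verified; your direct computation via cyclicity of the trace and the two-dimensional reduction for the norm is precisely the kind of routine verification the authors have in mind.
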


We start with a basic property of nonexpansive maps on $\cP(H)$.

\begin{lemma}\label{inclusion}
Let $\phi : {\cal P} (H) \to {\cal P} (H)$ be a map such that
\begin{equation}\label{xicontr}
\| \phi (P) - \phi (Q) \| \le \| P - Q \|, \ \ \ P,Q \in {\cal P}(H) ,
\end{equation}
and let $n$ be a positive integer. 
Assume that $\{P_1 , \ldots , P_n\} \subset {\cal P}(H)$ is an OSP contained in the image of $\phi$. 
If $Q_1, \ldots, Q_n\in \cP(H)$ satisfy $\phi(Q_j)=P_j$, $j=1, \ldots, n$, then
$\{Q_1 , \ldots , Q_n\} \subset {\cal P}(H)$ is also an OSP. 
Moreover, for every $Q \in {\cal P}(H)$ we have
$$
Q \le Q_1 + \cdots + Q_n \Rightarrow \phi (Q) \le P_1 + \cdots + P_n.
$$
\end{lemma}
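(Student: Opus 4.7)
The plan is to exploit the identity $\|P-Q\|^2 = 1 - \operatorname{tr}(PQ)$ valid on $\cP(H)$, which converts the nonexpansiveness hypothesis into the key inequality $\operatorname{tr}(\phi(P)\phi(Q)) \ge \operatorname{tr}(PQ)$ for all $P, Q \in \cP(H)$. Both conclusions follow from this single fact.

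For the orthogonality of the $Q_j$, I would observe that $P_i \perp P_j$ is equivalent to $\|P_i - P_j\| = 1$, and since $\|P-Q\| \le 1$ always holds in $\cP(H)$, the chain
\[
1 = \|P_i - P_j\| = \|\phi(Q_i) - \phi(Q_j)\| \le \|Q_i - Q_j\| \le 1
\]
forces $\|Q_i - Q_j\| = 1$, hence $Q_i \perp Q_j$ for $i \ne j$.

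For the second assertion, fix $Q \in \cP(H)$ with $Q \le Q_1 + \cdots + Q_n$. The target $\phi(Q) \le P_1 + \cdots + P_n$ is equivalent to $\operatorname{tr}(\phi(Q)(P_1 + \cdots + P_n)) = 1$, since $\{P_1, \ldots, P_n\}$ is an OSP and for any rank one projection $R$ one has $\operatorname{tr}(R \sum_i P_i) \le 1$ with equality exactly when $\operatorname{range}(R) \subset \operatorname{range}(\sum_i P_i)$. Applying the key inequality term by term,
\[
\operatorname{tr}(\phi(Q) P_i) = \operatorname{tr}(\phi(Q)\phi(Q_i)) \ge \operatorname{tr}(Q Q_i),
\]
and summing over $i$ gives $\sum_{i=1}^n \operatorname{tr}(\phi(Q) P_i) \ge \operatorname{tr}(Q \sum_i Q_i) = 1$, where the last equality uses $Q \le Q_1 + \cdots + Q_n$ and the orthogonality of the $Q_i$ established in the previous step. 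Combined with the universal upper bound $\sum_i \operatorname{tr}(\phi(Q) P_i) \le 1$, the inequality must be an equality, yielding $\phi(Q) \le P_1 + \cdots + P_n$.

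There is no real obstacle here; the only subtle point is recognizing that the nonexpansive hypothesis on the norm distance is equivalent to the reverse inequality on trace inner products, after which both halves of the lemma are one-line consequences of combining this with the known characterizations of orthogonality ($\|P-Q\| = 1$) and of range inclusion ($\operatorname{tr}(R\sum P_i)=1$) for rank one projections.
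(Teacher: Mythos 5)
Your proof is correct and follows essentially the same route as the paper: orthogonality of the $Q_j$ via the chain $1 = \|P_i - P_j\| = \|\phi(Q_i)-\phi(Q_j)\| \le \|Q_i - Q_j\| \le 1$, then the trace inequality ${\rm tr}\,(\phi(Q)\phi(Q_i)) \ge {\rm tr}\,(QQ_i)$ summed over $i$ and pinched against the universal bound ${\rm tr}\,\bigl(\phi(Q)\sum_i P_i\bigr) \le 1$. The only difference is in the final step, where you invoke the standard fact that ${\rm tr}\,(RE)=1$ forces $R \le E$ for a rank one projection $R$ and a projection $E$ (valid since ${\rm tr}\,(P_xE)=\|Ex\|^2$), whereas the paper proves this in place by writing $\phi(Q)$ as a block matrix and using positivity and the trace condition to annihilate the complementary and off-diagonal blocks.
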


\begin{proof}
For $1\leq i< j\leq n$, we know 
$$1=\|P_i-P_j\|=\|\phi(Q_i)-\phi(Q_j)\|\leq \|Q_i-Q_j\|\leq 1 ,$$
which implies that $\{Q_1 , \ldots , Q_n\} \subset {\cal P}(H)$ is an OSP.

Assume that $Q \le Q_1 + \cdots + Q_n$. Then $Q = Q(  Q_1 + \cdots + Q_n )$, and consequently,
\begin{equation}\label{sumone}
1 = {\rm tr}\, Q = {\rm tr}\, (Q Q_1) + \cdots + {\rm tr}\, (Q Q_n) .
\end{equation}
Let $\{ e_1 , \ldots , e_n \}$ be  an orthonormal set of vectors such that  
$P_j = P_{e_j}$, $j=1, \ldots, n$. Then with respect to the direct sum decomposition
$$
H = {\rm span} \, \{ e_1, \ldots , e_n \} \oplus \{ e_1, \ldots , e_n \}^\perp
$$
the rank one projections $P_j$, $j=1, \ldots , n$, have matrix representations
$$
P_j=\phi (Q_j) = \begin{bmatrix}   E_{jj} & 0 \\ 0 & 0 \end{bmatrix}.
$$
Here, $E_{jj}$ stands for the $n \times n$ matrix whose all entries are zero but the $(j,j)$-entry which is equal to $1$.
Let
$$
\phi (Q) = \begin{bmatrix} \begin{bmatrix}  q_{11}&q_{12} & \cdots &q_{1n} \\  q_{21}&q_{22}&\cdots & q_{2n} \\ \vdots & \vdots & \ddots & \vdots \\  q_{n1}&q_{n2}& \cdots & q_{nn}  \end{bmatrix} & A \\ A^\ast & B \end{bmatrix} = 
 \begin{bmatrix}   R & A \\ A^\ast & B \end{bmatrix}
$$
be the corresponding matrix representation of the rank one projection $\phi(Q)$.

Obviously, ${\rm tr}\, (\phi (Q) \phi (Q_j) ) = q_{jj}$. From (\ref{xicontr}) we infer that ${\rm tr}\, (QQ_j) \le {\rm tr}\, (\phi (Q) \phi (Q_j) )$, $j=1, \ldots , n$. Hence, by (\ref{sumone}) we have
$$
1 \le {\rm tr}\, (\phi (Q) \phi (Q_1)) + \cdots + {\rm tr}\, (\phi( Q) \phi( Q_n)) = q_{11} + \cdots + q_{nn} \le {\rm tr} \, \phi (Q) = 1,
$$
and therefore, $q_{11} + \cdots + q_{nn} = 1$. This further implies that $B$ is a positive trace zero operator, and consequently, $B=0$. Using the fact that $\phi (Q)$ is positive we conclude that $A= 0$. Hence, $R$ is a rank one projection and
$$
\phi (Q) = \begin{bmatrix}   R & 0 \\ 0 & 0 \end{bmatrix}\le 
 \begin{bmatrix}   I & 0 \\ 0 & 0 \end{bmatrix} = P_1 + \cdots + P_n .
$$
\end{proof}

From the first part of this lemma, we immediately see the following. 
\begin{corollary}\label{image}
Let $\phi\colon \cP(H)\to \cP(H)$ be a nonexpansive mapping. 
Then the following two conditions are equivalent. 
\begin{itemize}
\item $\phi$ maps some OSP of $\cP(H)$ onto a COSP of $\cP(H)$. 
\item There is a COSP of $\cP(H)$ that is contained in the image of $\phi$.
\end{itemize}
If in addition $\dim H<\infty$, then the above conditions are also equivalent to the condition that 
$\phi$ maps some COSP of $\cP(H)$ onto a COSP of $\cP(H)$. 
\end{corollary}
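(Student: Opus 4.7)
The plan is to show that this corollary follows essentially immediately from the first assertion of Lemma \ref{inclusion}, so the proof reduces to careful bookkeeping on pairwise orthogonality plus a cardinality argument in the finite-dimensional case.

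For the main equivalence, the direction ``$\phi$ maps some OSP onto a COSP'' $\Rightarrow$ ``the image of $\phi$ contains a COSP'' is tautological: the COSP that is the image of the OSP is, by definition, contained in the image of $\phi$. For the converse, suppose $\{P_\alpha \, : \, \alpha \in \mathcal{J}\} \subset \phi(\cP(H))$ is a COSP. I would pick any preimage $Q_\alpha \in \phi^{-1}(P_\alpha)$ for each $\alpha \in \mathcal{J}$ (invoking the axiom of choice if $\mathcal{J}$ is infinite). To conclude that $\{Q_\alpha\}$ is an OSP, it suffices to verify pairwise orthogonality, and for each fixed pair $\alpha \neq \beta$ this is exactly the $n=2$ case of Lemma \ref{inclusion} applied to the OSP $\{P_\alpha, P_\beta\}$. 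Since $\alpha \mapsto P_\alpha$ is an injection and $\phi$ is a function, the $Q_\alpha$ are automatically distinct, so $\phi$ maps the OSP $\{Q_\alpha\, : \, \alpha \in \mathcal{J}\}$ onto the COSP $\{P_\alpha\, : \, \alpha \in \mathcal{J}\}$.

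For the finite-dimensional supplement, the point is that every COSP in $\cP(H)$ has cardinality exactly $\dim H$, while every OSP has cardinality at most $\dim H$. Thus if the OSP $\{Q_\alpha\}$ built above is mapped onto a COSP (of cardinality $\dim H$), then $|\mathcal{J}| = \dim H$, forcing $\{Q_\alpha\}$ itself to be a COSP. The reverse implication, that mapping a COSP onto a COSP is a special case of mapping an OSP onto a COSP, is immediate since every COSP is an OSP.

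There is no real obstacle here; the only thing to be a little careful about is that Lemma \ref{inclusion} is stated for finite $n$, so one should not try to apply it globally to an infinite OSP but instead extract pairwise orthogonality by applying the $n=2$ instance to each pair of indices separately. With that observation the proof is just a short combination of the lemma and elementary facts about cardinalities of orthogonal families in $\cP(H)$.
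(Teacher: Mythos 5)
Your proposal is correct and takes essentially the same route as the paper, which states the corollary as an immediate consequence of the first part of Lemma \ref{inclusion}; your write-up simply makes explicit the pairwise ($n=2$) application of that lemma to chosen preimages, together with the tautological direction and the cardinality argument for the finite-dimensional case. No gaps.
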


Let $S^1$ denote the set of all complex numbers of modulus 1. 
We consider a nonexpansive map on $S^1$, that is, a function $g : S^1 \to S^1$ satisfying
$$\lvert g(z_1) - g(z_2)\rvert \leq \lvert z_1-z_2\rvert , \ \ \ z_1, z_2 \in S^1.$$
Clearly, this condition is equivalent to 
\begin{equation}\label{condit2}
{\rm Re}\, (g(z_1)\overline{g(z_2)}) \geq {\rm Re}\, (z_1\overline{z_2}), \ \ \ z_1, z_2 \in S^1.
\end{equation}

We give an easy lemma.
\begin{lemma}\label{sone}
Let $g : S^1 \to S^1$ be nonexpansive.  
Then one of the following holds: 
\begin{itemize}
\item There is $c\in S^1$ such that $g(z)=cz$, $z\in S^1$. 
\item There is $c\in S^1$ such that $g(z)=c\overline{z}$, $z\in S^1$. 
\item The image $g(S^1)\subset S^1$ is compact, connected, and contained in some closed half-circle. 
\end{itemize}
\end{lemma}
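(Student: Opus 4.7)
Since $g$ is $1$-Lipschitz it is continuous, so $g(S^1)$ is automatically compact and connected. If moreover $g(S^1)$ lies in some closed half-circle then the third alternative holds, so the plan is to assume $g(S^1)$ is \emph{not} contained in any closed half-circle and deduce that one of the first two alternatives must hold. A preliminary observation is that a compact connected subset of $S^1$ avoiding every closed half-circle is either $S^1$ itself or a closed arc of length strictly greater than $\pi$; in either case it contains some antipodal pair $\{w_0,-w_0\}$.

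Next I pick $z_\pm\in S^1$ with $g(z_\pm)=\pm w_0$. The nonexpansive inequality forces $|z_+-z_-|\ge |w_0-(-w_0)|=2$, and since $|z_+-z_-|\le 2$ this gives $z_-=-z_+$. For an arbitrary $z\in S^1$, applying the equivalent form \eqref{condit2} to the pairs $(z,z_+)$ and $(z,-z_+)$ yields both ${\rm Re}\,(g(z)\overline{w_0})\ge {\rm Re}\,(z\overline{z_+})$ and $-{\rm Re}\,(g(z)\overline{w_0})\ge -{\rm Re}\,(z\overline{z_+})$. These two inequalities together force equality ${\rm Re}\,(g(z)\overline{w_0})={\rm Re}\,(z\overline{z_+})$ for every $z\in S^1$, which pins $g(z)$ down to one of two values:
\[
 g(z)\in\{\,w_0 z/z_+,\; w_0\overline{z/z_+}\,\}.
\]

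Finally, I promote this pointwise dichotomy to a global one by continuity. The two candidate values coincide exactly at $z=\pm z_+$; hence on each of the two connected components of $S^1\setminus\{\pm z_+\}$ the subsets on which $g$ takes the first, resp.\ the second, value are both relatively open, and by connectedness one of them is empty. Of the four resulting global configurations, the two ``pure'' ones give $g(z)=cz$ and $g(z)=c\overline{z}$, which are the first two alternatives of the lemma. In each of the two ``mixed'' configurations a short direct computation shows that $g(S^1)$ is one of the two closed half-circles bounded by $\pm w_0$, contradicting our standing assumption. The only real subtlety of the argument lies in this last step, where the ``no half-circle'' hypothesis is used precisely to eliminate the mixed configurations.
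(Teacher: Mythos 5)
Your proof is correct, but it takes a genuinely different route from the paper's. The paper splits on surjectivity: if $g$ is surjective, it invokes the second item of Theorem \ref{metric} (the compact-metric-space fact quoted from \cite{BBI}) to conclude that $g$ is a surjective isometry, hence of the form $cz$ or $c\overline{z}$; if $g$ is not surjective, the image is a proper closed arc, and a path-length argument (nonexpansive maps do not increase the length of paths, and every point of $S^1$ is joined to a fixed point by a path of length at most $\pi$) shows the arc has angular length at most $\pi$. You instead split on whether $g(S^1)$ lies in a closed half-circle, and when it does not you extract an antipodal pair $\pm w_0$ from the image and run a rigidity argument: the nonexpansive inequality forces the preimages to satisfy $z_-=-z_+$, then forces the equality ${\rm Re}\,(g(z)\overline{w_0})={\rm Re}\,(z\overline{z_+})$ for all $z$, pinning $g(z)$ to the two candidates $w_0z/z_+$ and $w_0\overline{z/z_+}$; the open-closed partition argument on $S^1\setminus\{\pm z_+\}$ plus elimination of the two mixed configurations (whose image is exactly a closed half-circle, contradicting the standing assumption) yields $g(z)=cz$ or $g(z)=c\overline{z}$. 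All of these steps check out, including the preliminary classification of compact connected subsets of $S^1$ and the computation in the mixed cases. What your route buys: it is self-contained, using neither the Burago--Burago--Ivanov theorem nor a prior classification of surjective isometries of $S^1$ (that classification falls out as a byproduct), and it isolates a sharper rigidity phenomenon: as soon as the image of a nonexpansive self-map of $S^1$ contains two antipodal points, the map is forced to be a rotation or a reflection. What the paper's route buys: it is shorter in context, since Theorem \ref{metric} is already stated and used elsewhere in the paper, and its path-length argument gives a more directly geometric explanation of why a non-surjective image cannot exceed a half-circle.
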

\begin{proof}
If $g$ is surjective, then the second item of Theorem \ref{metric} implies that $g$ is a surjective isometry. 
It follows that $g$ satisfies one of the first two options.
Assume that $g$ is not surjective. 
The continuity of $g$ implies that $g(S^1)$ is compact and connected.  
We may assume by rotation that $g(1)=1$ and $g(S^1)=\{e^{i\theta}\,:\, 0\leq \theta \leq \theta_0\}$ with $0\leq \theta_0<2\pi$, without loss of generality. 
For every point $z$ of $S^1$, there is a path in $S^1$ from $1$ to $z$ with length at most $\pi$. 
Therefore, the assumption that $g$ is nonexpansive implies that $\theta_0\leq \pi$.
\end{proof}

The following corollary will be used in the proof of Proposition \ref{diagonal}.
\begin{corollary}\label{homo}
Let $g : S^1 \to S^1$ be nonexpansive. 
If in addition $g$ is a group homomorphism, i.e., if $g(zw)=g(z)g(w)$, $z,w\in S^1$, then one of the following holds: 
\begin{itemize}
\item $g(z)=z$, $z\in S^1$. 
\item $g(z)=\overline{z}$, $z\in S^1$. 
\item $g(S^1)=\{1\}$. 
\end{itemize}
\end{corollary}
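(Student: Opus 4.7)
The plan is to apply Lemma \ref{sone} directly and then eliminate the possibilities that are incompatible with the homomorphism property.

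First I would note that because $g$ is a group homomorphism, $g(1)=g(1\cdot 1)=g(1)g(1)$, so $g(1)=1$. Also, since $g$ is a homomorphism of $S^1$, its image $g(S^1)$ is a subgroup of $S^1$, and since $g$ is continuous (being nonexpansive), the image is also compact (hence closed) and connected.

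Next I would invoke Lemma \ref{sone}, which gives three alternatives. In the first case $g(z)=cz$ for some $c\in S^1$; evaluating at $z=1$ and using $g(1)=1$ forces $c=1$, yielding $g(z)=z$. In the second case $g(z)=c\overline{z}$; the same evaluation gives $c=1$, yielding $g(z)=\overline{z}$. In the third case, $g(S^1)$ is a closed connected subgroup of $S^1$ contained in some closed half-circle. The only closed connected subgroups of $S^1$ are $\{1\}$ and $S^1$ itself (since every proper closed subgroup of $S^1$ is finite, hence totally disconnected unless trivial). As $g(S^1)$ lies in a proper closed arc, it cannot equal $S^1$, so $g(S^1)=\{1\}$.

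There is no real obstacle here; the only step that needs a moment's care is the classification of closed connected subgroups of $S^1$, which is standard and could alternatively be argued by observing that any nontrivial element $e^{i\theta}\in g(S^1)$ with $\theta\neq 0$ generates via powers a dense subset of $S^1$ if $\theta/(2\pi)$ is irrational (contradicting the half-circle restriction), while if $\theta/(2\pi)$ is rational the cyclic group generated is a finite set of more than two points on $S^1$, again not contained in any closed half-circle (except in the trivial case). Either route quickly closes the third alternative.
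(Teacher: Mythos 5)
Your proposal is correct and follows essentially the same route as the paper: invoke Lemma \ref{sone}, use $g(1)=1$ to force $c=1$ in the first two cases, and in the third case combine the subgroup property of the image with its connectedness and the half-circle containment to conclude $g(S^1)=\{1\}$ (the paper leaves exactly these details as ``clear''). One caveat: your optional alternative argument at the end has a gap at elements of order two, since $-1$ generates only the two-point group $\{1,-1\}$, which \emph{is} contained in a closed half-circle; ruling out $-1\in g(S^1)$ really requires connectedness (a connected subset of $S^1$ containing $1$ and $-1$ contains a full half-circle, and a subgroup containing a half-circle is all of $S^1$), so your primary subgroup-classification argument is the one to keep.
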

\begin{proof}
Trivial in the case of the first two options in Lemma \ref{sone}. 
If the third option in Lemma \ref{sone} holds, then $g(S^1)$ is connected and contained in some closed half-circle. 
This together with the additional assumption clearly shows that $g(S^1)=\{1\}$. 
\end{proof}

Let $g\colon S^1\to S^1$ be a nonexpansive map. 
We define a map $\tau : {\cal P}(\mathbb{C}^2) \to  {\cal P}(\mathbb{C}^2)$ in the following way: We first set
$$
\tau (E_{11}) = E_{11}  \ \ \ {\rm and} \ \ \ \tau (E_{22}) = E_{22}. $$
Let $P$ be an arbitrary rank one projection with $P\not= E_{11}, E_{22}$. By Lemma \ref{2x2},
$$
P = \begin{bmatrix} p & z\sqrt{p (1-p)} \\  \overline{z}\sqrt{p (1-p)} & 1-p \end{bmatrix}
$$
for some real number $p$, $0 < p < 1$, and $z\in S^1$, and for such $P$ we define 
$$
\tau (P) =\begin{bmatrix} p & g(z)\sqrt{p (1-p)} \\   \overline{g(z)} \sqrt{p (1-p)}& 1-p  \end{bmatrix}.
$$
Any such map will be called a standard nonexpansive map of ${\cal P}(\mathbb{C}^2)$. 

We claim that every such map satisfies
$$
\| \tau (P_1) - \tau (P_2) \| \le \| P_1 - P_2 \|, \ \ \ P_1 ,P_2 \in  {\cal P}(\mathbb{C}^2).
$$
Equivalently, we need to show that
\begin{equation}\label{damja}
{\rm tr}\, (\tau (P_1 )  \tau (P_2 ))  \ge {\rm tr}\, ( P_1  P_2 ), \ \ \ P_1 ,P_2 \in  {\cal P}(\mathbb{C}^2).
\end{equation}
It is trivial to verify the above inequality if any of $P_1 ,P_2$ is equal to any of $E_{11}, E_{22}$. If 
$$
P_j = \begin{bmatrix}  p_j & z_j\sqrt{p_j (1-p_j)} \\  \overline{z_j}\sqrt{p_j (1-p_j)} & 1-p_j  \end{bmatrix}, \ \ \ j=1,2,
$$
with $0 < p_j < 1$, $z_j\in S^1$, $j=1,2$, then a straightforward calculation shows that the verification of (\ref{damja}) reduces to (\ref{condit2}).

Note that $\tau$ is a Wigner symmetry if one of the first two options in Lemma \ref{sone} holds. 
If the third option holds, then $\tau$ is noninjective and nonsurjective.

\begin{proposition}\label{dimensiontwo}
Let $\phi : {\cal P}(\mathbb{C}^2) \to  {\cal P}(\mathbb{C}^2)$ be a map satisfying
$$
\| \phi (P) - \phi (Q) \| \le \| P - Q \|, \ \ \ P,Q \in  {\cal P}(\mathbb{C}^2),
$$
and assume that there exists a COSP of $\cP(\bC^2)$ that is contained in the image of $\phi$. 
Then there exist $2 \times 2$ unitary matrices $U,V$ and a standard nonexpansive map $\tau : {\cal P}(\mathbb{C}^2) \to  {\cal P}(\mathbb{C}^2)$ such that
$$
\phi (P)= V \tau (UPU^\ast) V^\ast , \ \ \ P \in {\cal P}(\mathbb{C}^2) .
$$
\end{proposition}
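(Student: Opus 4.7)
The plan is to reduce by unitary conjugation on both sides to a normalized nonexpansive map $\psi$ with $\psi(E_{11})=E_{11}$ and $\psi(E_{22})=E_{22}$, show that such a $\psi$ preserves the diagonal entries of every projection, and finally extract a nonexpansive function $g\colon S^1\to S^1$ that records the change in off-diagonal phase.

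Since the image of $\phi$ contains a COSP and every COSP of $\cP(\bC^2)$ has exactly two elements, I first produce orthogonal $P_1,P_2\in\cP(\bC^2)$ with $\phi(P_1)\perp\phi(P_2)$; Lemma \ref{inclusion} guarantees that $P_1\perp P_2$. Choose $2\times 2$ unitaries $U$ and $V$ with $UE_{jj}U^\ast=P_j$ and $\phi(P_j)=VE_{jj}V^\ast$ for $j=1,2$, and set $\psi(P):=V^\ast\phi(UPU^\ast)V$. Then $\psi$ is nonexpansive and fixes $E_{11}$ and $E_{22}$, and once $\psi$ is identified with a standard nonexpansive map $\tau$, the relation $\phi(P)=V\tau(U^\ast PU)V^\ast$ (after relabeling $U$ by $U^\ast$) yields the desired representation. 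To see that $\psi$ preserves the $(1,1)$-entry of every $P\in\cP(\bC^2)$, note that nonexpansivity against $E_{11}$ and $E_{22}$ combined with $\psi(E_{jj})=E_{jj}$ gives ${\rm tr}(\psi(P)E_{jj})\ge {\rm tr}(PE_{jj})$ for $j=1,2$, and the two traces sum to $1={\rm tr}\,\psi(P)$, forcing equality. By Lemma \ref{2x2}, writing $P_{p,z}$ for the projection with parameters $p\in[0,1]$ and $z\in S^1$ as in that lemma, we conclude $\psi(P_{p,z})=P_{p,w}$ for some $w=w(p,z)\in S^1$ whenever $0<p<1$.

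The heart of the argument is to verify that $w(p,z)$ depends only on $z$ and that the resulting function $g(z):=w(1/2,z)$ is nonexpansive on $S^1$; this is the main obstacle, with everything else being bookkeeping. Writing $P_{p,z}=P_v$ for $v=(\sqrt{p},\overline{z}\sqrt{1-p})^\top$ and using Lemma \ref{cos}, a direct computation gives
\[
{\rm tr}(P_{p,z}P_{p',z'})=pp'+(1-p)(1-p')+2\sqrt{pp'(1-p)(1-p')}\,{\rm Re}(z\overline{z'}).
\]
Applied to $z=z'$, this and the analogous identity for the images force ${\rm Re}(w(p,z)\overline{w(p',z)})\ge 1$, hence $w(p,z)=w(p',z)$ for all $p,p'\in(0,1)$, so $g$ is well-defined. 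Applied to $p=p'=1/2$ and arbitrary $z_1,z_2$, the identity reduces nonexpansivity to ${\rm Re}(g(z_1)\overline{g(z_2)})\ge{\rm Re}(z_1\overline{z_2})$, which is precisely \eqref{condit2}. Thus $g$ is nonexpansive, $\psi$ coincides with the standard nonexpansive map $\tau$ built from $g$, and the proof is complete.
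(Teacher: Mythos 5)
Your proposal is correct and follows essentially the same route as the paper: reduce by unitary conjugation to a map fixing $E_{11}$ and $E_{22}$ (the paper's Proposition \ref{snem}), use the trace inequalities against $E_{11}, E_{22}$ to see the diagonal entries are preserved, and then extract a nonexpansive phase function $g$ on $S^1$ via the trace formula for pairs of projections. Your verification that $w(p,z)$ is independent of $p$ (comparing $P_{p,z}$ with $P_{p',z}$) is the same computation the paper performs by comparing $\phi(P)$ with $\phi(T_z)$, so the two arguments coincide up to bookkeeping.
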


To show this proposition, we first see by Corollary \ref{image} that there is a COSP $\{Q_1 , Q_2\} \in {\cal P}(\mathbb{C}^2)$ such that $\{P_1, P_2\}$ is also a COSP, where $\phi (Q_j) = P_j$, $j=1,2$.
Thus, there exist $2 \times 2$ unitary matrices $U$ and $V$ such that $U^\ast E_{jj} U = Q_j$ and $V^\ast P_j V = E_{jj}$, $j=1,2$. It follows that the map $P \mapsto V^\ast \phi (U^\ast P U)V$, $P\in  {\cal P}(\mathbb{C}^2)$, sends matrices $E_{11}$ and $E_{22}$ to themselves.
Therefore, Proposition \ref{dimensiontwo} reduces to 
\begin{proposition}\label{snem}
Let $\phi : {\cal P}(\mathbb{C}^2) \to  {\cal P}(\mathbb{C}^2)$ be a map satisfying
$$\| \phi (P) - \phi (Q) \| \le \| P - Q \|, \ \ \ P,Q \in  {\cal P}(\mathbb{C}^2),$$
and $\phi(E_{11})=E_{11}$, $\phi(E_{22})=E_{22}$. 
Then $\phi$ is a standard nonexpansive map. 
\end{proposition}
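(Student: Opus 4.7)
The plan is to first show that $\phi$ fixes the diagonal entries of every rank one projection (when written in the standard basis), and then to show that the induced action on the off-diagonal phase is governed by a single nonexpansive map $g\colon S^1\to S^1$ that does not depend on the diagonal parameter.

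For the first step, fix $P\in\cP(\bC^2)$ written as in Lemma \ref{2x2}, with diagonal $(p,1-p)$, and note that ${\rm tr}\,(PE_{11})=p$ and ${\rm tr}\,(PE_{22})=1-p$. Since $\|R-S\|^2 = 1-{\rm tr}\,(RS)$ for $R,S\in \cP(\bC^2)$, the nonexpansive hypothesis together with $\phi(E_{jj})=E_{jj}$ yields
\[
{\rm tr}\,(\phi(P)E_{11})\geq p\quad\text{and}\quad {\rm tr}\,(\phi(P)E_{22})\geq 1-p.
\]
Since these two traces sum to ${\rm tr}\,\phi(P)=1$, both inequalities are equalities. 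Reading off the diagonal of $\phi(P)$, we conclude that $\phi(P)$ has the same diagonal $(p,1-p)$ as $P$. By Lemma \ref{2x2}, for each $p\in(0,1)$ and $z\in S^1$ we may therefore write $\phi(P_{p,z})=P_{p,g_p(z)}$ for some map $g_p\colon S^1\to S^1$, where $P_{p,z}$ denotes the rank one projection with diagonal $(p,1-p)$ and off-diagonal parameter $z$.

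For the second step, a direct computation gives
\[
{\rm tr}\,(P_{p_1,z_1}P_{p_2,z_2}) = p_1p_2+(1-p_1)(1-p_2)+2\sqrt{p_1(1-p_1)p_2(1-p_2)}\,{\rm Re}(z_1\overline{z_2}).
\]
Because the diagonal parts of $\phi(P_{p_1,z_1})$ and $\phi(P_{p_2,z_2})$ coincide with those of $P_{p_1,z_1}$ and $P_{p_2,z_2}$, the inequality ${\rm tr}\,(\phi(P_{p_1,z_1})\phi(P_{p_2,z_2}))\geq {\rm tr}\,(P_{p_1,z_1}P_{p_2,z_2})$ simplifies, when $p_1,p_2\in(0,1)$, to
\[
{\rm Re}\,(g_{p_1}(z_1)\overline{g_{p_2}(z_2)})\geq {\rm Re}(z_1\overline{z_2}).
\]
Taking $z_1=z_2=z$ forces ${\rm Re}\,(g_{p_1}(z)\overline{g_{p_2}(z)})\geq 1$, and since both values lie on $S^1$ we get $g_{p_1}(z)=g_{p_2}(z)$. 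Thus $g:=g_p$ is independent of $p\in(0,1)$, and specializing $p_1=p_2$ in the displayed inequality shows that $g$ is nonexpansive on $S^1$ (cf.\ (\ref{condit2})). This is exactly the data defining a standard nonexpansive map, so $\phi=\tau$.

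Neither step presents a real obstacle; the only thing to be careful about is the extremal case where $p\in\{0,1\}$, which is handled by the fixing hypothesis $\phi(E_{jj})=E_{jj}$, so the formula $\phi(P_{p,z})=P_{p,g(z)}$ (with the convention that the off-diagonal phase is irrelevant when $p\in\{0,1\}$) holds across the whole of $\cP(\bC^2)$.
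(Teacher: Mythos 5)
Your proposal is correct and follows essentially the same route as the paper: both first use the fixed points $E_{11},E_{22}$ and the trace identity $\|R-S\|^2=1-{\rm tr}\,(RS)$ to show that $\phi$ preserves the diagonal $(p,1-p)$, and then extract the off-diagonal phase map $g$ from the trace inequality ${\rm tr}\,(\phi(P)\phi(Q))\ge{\rm tr}\,(PQ)$, reducing to condition (\ref{condit2}). The only cosmetic difference is that the paper anchors the definition of $g$ on the circle $p=\tfrac12$ (the projections $T_u$) and then matches phases of arbitrary $P$ against $T_z$, whereas you define $g_p$ at every level $p\in(0,1)$ and show the $g_p$ coincide by comparing equal phases across different levels --- the same specialization of the same inequality.
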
 
\begin{proof}
From $\| \phi (P) - \phi (Q) \| \le \| P - Q \|$, $P,Q \in  {\cal P}(\mathbb{C}^2)$, we infer that ${\rm tr}\, (\phi (P) \phi (Q)) \ge {\rm tr}\, (PQ)$. 
Let $P$ be an arbitrary rank one projection with $P\not= E_{11}, E_{22}$. By Lemma \ref{2x2},
\begin{equation}\label{one}
P = \begin{bmatrix} p & z \sqrt{p (1-p)}  \\   \overline{z} \sqrt{p (1-p)} & 1-p \end{bmatrix}
\end{equation}
for some real number $p$, $0 < p < 1$, and some complex number $z \in S^1$. 
Denote
\begin{equation}\label{two}
Q = \phi (P) = \begin{bmatrix}  q & w \sqrt{q (1-q)}  \\   \overline{w} \sqrt{q (1-q)} & 1-q  \end{bmatrix},
\end{equation}
where $0 \le q \le 1$ and $w \in S^1$. We have
$$
q = {\rm tr}\, ( \phi (P) \phi (E_{11})) \ge {\rm tr}\, (P E_{11}) = p,
$$
and similarly, $1-q \ge 1-p$, and consequently, $p=q$. 

In particular,
there exists a function $g : S^1 \to S^1$ such that
$$
\phi (T_u) = T_{g(u)},\ \ \ \text{where}\ \  T_u=\begin{bmatrix}  {1 \over 2} &   {1 \over 2} u  \\   {1 \over 2}   \overline{u} &  {1 \over 2} \end{bmatrix},\ \ \ \text{for every}\ \ u \in S^1.
$$ 
From 
$$
{\rm tr}\, (T_{g(u_1)} T_{g(u_2)} )={\rm tr}\, (\phi (T_{u_1}) \phi (T_{u_2}) ) \ge {\rm tr}\, (T_{u_1}   T_{u_2} ),\ \ \ u_1, u_2\in S^1, 
$$
we deduce that $g\colon S^1\to S^1$ is nonexpansive.

Taking $P$, $Q$ as in (\ref{one}) and (\ref{two}) (hence $p=q$), and applying 
$$
{\rm tr}\, (T_{g(z)}  \phi (P))={\rm tr}\, (\phi (T_z)  \phi (P)) \ge {\rm tr}\, (T_z  P ),
$$ 
we conclude that $w = g (z)$. 
That is, for every 
pair $p, z$, $0 < p < 1$, $z \in S^{1}$, we have
$$
\phi \left( \begin{bmatrix}  p &  z \sqrt{p (1-p)}\\  \overline{z} \sqrt{p (1-p)}  & 1-p \end{bmatrix} \right) 
=
\begin{bmatrix} p &  g(z) \sqrt{p (1-p)} \\  \overline{g(z)} \sqrt{p (1-p)}  & 1-p \end{bmatrix}.
$$
The proof is complete.
\end{proof}

\begin{remark}\label{additional}In the next section we will need the following straightforward consequence. Let $\phi\colon \cP(\bC^n)\to \cP(\bC^n)$ be a nonexpansive map and $P \in  \cP(\bC^n)$ any projection of rank one. Let further $r$ be a positive integer, $1 \le r < n$. We write $P$ in the block form
$$
P = \left[ \begin{matrix} R_1 & N \cr N^\ast & R_2 \end{matrix} \right]
$$
where $R_1$ is an $r \times r$ matrix. 
Since both $R_1$ and $R_2$ are positive and of rank at most one, and because $P$ is a trace one matrix, we have $R_1 = \lambda P_1$ and $R_2 = (1-\lambda) P_2$ for some rank one projections $P_1$ and $P_2$ and for some real $\lambda$, $0 \leq \lambda \leq 1$.

If there exist $r \times r$ projection $Q_1$ and $(n-r) \times (n-r)$ projection $Q_2$ such that
$$\phi \left(  \left[ \begin{matrix} P_1 & 0\cr  0 & 0 \end{matrix} \right] \right) = \left[ \begin{matrix} Q_1 & 0\cr  0 & 0 \end{matrix} \right] \ \ \ {\rm and} \ \ \
\phi \left(  \left[ \begin{matrix} 0 & 0\cr  0 & P_2 \end{matrix} \right] \right) = \left[ \begin{matrix} 0 & 0\cr 0  & Q_2 \end{matrix} \right],
$$
then Proposition \ref{snem} combined with Lemma \ref{inclusion} yields that $\phi (P)$ is a matrix of the form
$$
\phi (P) = \left[ \begin{matrix} \lambda Q_1 & * \cr *  & (1 - \lambda ) Q_2 \end{matrix} \right].
$$
\end{remark}

\section{Proofs of the main results}
We first give a proof of Theorem \ref{Phiinfin} in the finite-dimensional case.
Thus, we assume that $n\geq 3$ and consider a nonexpansive mapping $\phi\colon \cP(\bC^n)\to \cP(\bC^n)$ which maps some COSP onto a COSP.
As in the proof of Proposition \ref{dimensiontwo}, it suffices to prove the following proposition. 

\begin{proposition}\label{diagonal}
Assume $n\geq 3$. 
Let $\phi\colon \cP(\bC^n)\to \cP(\bC^n)$ be a mapping satisfying 
\[
\|\phi(P)-\phi(Q)\|\leq \|P-Q\|,\quad P,Q\in \cP(\bC^n),
\]
and assume that $\phi(E_{jj})=E_{jj}$, $j=1, 2, \ldots, n$.

Then there is an $n \times n$ diagonal unitary matrix $U$ whose $(1,1)$-entry is $1$ such that one of the following three possibilities holds true: 
$$\phi (P)= U P U^\ast , \ \ \ P \in {\cal P}(\mathbb{C}^n) ,$$
$$\phi (P)= U P^t U^\ast , \ \ \ P \in {\cal P}(\mathbb{C}^n) ,$$
where $A^t$ denotes the transpose of a matrix $A$, or 
$$\phi (P)= U \Phi (P) U^\ast , \ \ \ P \in {\cal P}(\mathbb{C}^n) ,$$
where $\Phi$ is given by \eqref{ab}.
\end{proposition}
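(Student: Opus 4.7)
The plan is to normalize $\phi$ so that it fixes a canonical reference set of projections, and then classify it via pairwise 2D restrictions combined with a rank-one algebraic identity that forces a cocycle. First, applying Lemma \ref{inclusion} to the OSPs $\{E_{ii}\}_{i\in S}$ shows that for every subset $S\subseteq\{1,\ldots,n\}$, every rank one projection supported on ${\rm span}\,\{e_i\,:\,i\in S\}$ has $\phi$-image supported on the same subspace. Specializing $|S|=1$ yields the diagonal preservation $(\phi(P))_{ii}=P_{ii}$, so $\phi(P_v)=P_w$ with $|w_i|=|v_i|$. Specializing $|S|=2$ and invoking Proposition \ref{snem} yields nonexpansive maps $g_{ij}\colon S^1\to S^1$ controlling $\phi$ on each coordinate plane ${\rm span}\,\{e_i,e_j\}$. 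Each $P_{(e_1+e_j)/\sqrt{2}}$ is sent to $P_{(e_1+c_je_j)/\sqrt{2}}$ for some $c_j\in S^1$; replacing $\phi$ by $P\mapsto U^\ast\phi(P)U$ with $U={\rm diag}(1,c_2,\ldots,c_n)$ (whose $(1,1)$-entry equals $1$), I may assume $g_{1j}(1)=1$ for every $j\geq 2$.

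The central calculation determines the off-diagonal entries of $\phi(P_v)$ when all $v_i\neq 0$. Write $\phi(P_v)=P_w$ with $w_i=u_i|v_i|$, $u_i\in S^1$, and fix $i\neq j$. Testing nonexpansiveness against $R=P_r$ with $r=(v_ie_i+v_je_j)/\sqrt{|v_i|^2+|v_j|^2}$, I compute ${\rm tr}\,(P_vR)=|v_i|^2+|v_j|^2$, and using the explicit form of $\phi(R)$ supplied by Proposition \ref{snem}, I obtain
\[
\left|u_i|v_i|^2+u_jg_{ij}(z_{ij})|v_j|^2\right|^2\geq\bigl(|v_i|^2+|v_j|^2\bigr)^2,\qquad z_{ij}=\frac{v_i\overline{v_j}}{|v_i|\,|v_j|}.
\]
The triangle inequality provides the reverse bound, forcing equality and hence $u_i\overline{u_j}=g_{ij}(z_{ij})$. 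Consequently $(\phi(P_v))_{ij}=g_{ij}(z_{ij})\,|v_i|\,|v_j|$ for every pair $i\neq j$.

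The hypothesis $n\geq 3$ enters next: for any three distinct indices $i,j,k$, the rank-one identity $(\phi(P_v))_{ij}(\phi(P_v))_{jk}=(\phi(P_v))_{jj}(\phi(P_v))_{ik}$ combined with the previous formula yields the cocycle
\[
g_{ij}(a)\,g_{jk}(b)=g_{ik}(ab),\qquad a,b\in S^1.
\]
Setting $i=1$ and $a=1$ (using $g_{1j}(1)=1$) gives $g_{jk}=g_{1k}$, while $b=1$ gives $g_{1j}=g_{1k}$, so all $g_{ij}$ coincide with a single nonexpansive multiplicative $g\colon S^1\to S^1$. Corollary \ref{homo} then forces $g$ to be $z\mapsto z$, $z\mapsto\overline z$, or $z\mapsto 1$. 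Substituting back shows $(\phi(P_v))_{ij}$ equals $v_i\overline{v_j}$, $\overline{v_i}v_j$, or $|v_i|\,|v_j|$, so $\phi(P_v)$ equals $P_v$, $P_v^t$, or $\Phi(P_v)$ respectively; vectors $v$ with some zero coordinate are covered by the first paragraph, which restricts $\phi$ to the coordinate subspace supporting $P_v$, where the already-identified $g$ dictates the action via Proposition \ref{snem}. Undoing the normalization gives the three forms in the statement. I expect the main obstacle to be the off-diagonal formula in the second paragraph, which extracts a sharp phase equality from the saturation of a triangle inequality; the short cocycle step is then where the hypothesis $n\geq 3$ becomes indispensable, forcing the a priori independent $g_{ij}$ to collapse to a single group homomorphism.
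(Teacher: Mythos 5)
Your proof is correct, and it takes a genuinely different route from the paper's. The paper argues by induction on $n$: the base case $n=3$ uses Remark \ref{additional} (block-form preservation) to derive a global formula for $\phi$ in terms of three circle maps $f_{12},f_{13},f_{23}$, extracts the identity $f_{12}(z)f_{23}(\overline{z}w)=f_{13}(w)$ from the requirement that the image be a rank one projection, and applies Corollary \ref{homo}; the induction step for $n\geq 4$ then has to align the two $(n-1)\times(n-1)$ corner restrictions and finish with a continuity argument for projections dominated by $E_{11}+E_{nn}$. You avoid the induction entirely: your saturation argument --- testing nonexpansiveness against the normalized two-coordinate compression $R$ of $v$ and forcing equality in a triangle inequality --- pins down every off-diagonal phase $(\phi(P_v))_{ij}=g_{ij}(z_{ij})|v_i||v_j|$ directly in all dimensions, after which the rank-one entry identity yields the cocycle $g_{ij}(a)g_{jk}(b)=g_{ik}(ab)$ and Corollary \ref{homo} finishes, uniformly for all $n\geq 3$ and with no continuity step. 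Both arguments consume the same preliminaries (Lemma \ref{inclusion}, Proposition \ref{snem}, Corollary \ref{homo}) and locate the role of the hypothesis $n\geq 3$ in the same place (multiplicativity of $g$); yours is more uniform and self-contained, the paper's is more modular, with Remark \ref{additional} doing the work your explicit computation does.

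Two small patches are needed. First, the diagonal preservation $(\phi(P))_{ii}=P_{ii}$ does not follow from the $|S|=1$ case of Lemma \ref{inclusion}, which merely restates the hypothesis $\phi(E_{ii})=E_{ii}$; it follows instead from summing ${\rm tr}\,(\phi(P)E_{ii})={\rm tr}\,(\phi(P)\phi(E_{ii}))\geq {\rm tr}\,(PE_{ii})$ over $i$, where both sides total $1$, forcing equality in each term. Second, for vectors with zero coordinates whose support has size at least $3$, Proposition \ref{snem} alone does not dictate the action; the clean fix is to observe that your central calculation only requires $v_i\neq 0$ and $v_j\neq 0$ for the particular pair $(i,j)$ under consideration, while entries with $v_iv_j=0$ vanish on both sides, so the formula $(\phi(P_v))_{ij}=g(z_{ij})\,|v_i|\,|v_j|$ holds for every $v$ and every pair with no separate case analysis.
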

\begin{proof}
We first consider the case that $n=3$.
Lemma \ref{inclusion} and Proposition \ref{snem} imply that there is a nonexpansive mapping $f_{12}\colon S^1\to S^1$ such that 
\[
\phi\left(
\begin{bmatrix}
p & z \sqrt{p (1-p)}  &0\\
\overline{z} \sqrt{p (1-p)} & 1-p & 0\\
0&0&0
\end{bmatrix}
\right)
=\begin{bmatrix}
p & f_{12}(z) \sqrt{p (1-p)}  &0\\
\overline{f_{12}(z)} \sqrt{p (1-p)} & 1-p & 0\\
0&0&0
\end{bmatrix}
\]
for every $p\in [0, 1]$ and $z\in S^1$.
Similarly,  there are nonexpansive mappings $f_{13}, f_{23}\colon S^1\to S^1$ such that 
\[
\phi\left(
\begin{bmatrix}
p &0& z \sqrt{p (1-p)}  \\ 0&0&0\\
\overline{z} \sqrt{p (1-p)} &0& 1-p
\end{bmatrix}
\right)
=\begin{bmatrix}
p &0&  f_{13}(z) \sqrt{p (1-p)}  \\ 0&0&0\\
\overline{ f_{13}(z)} \sqrt{p (1-p)} &0& 1-p
\end{bmatrix}
\]
and 
\[
\phi\left(
\begin{bmatrix}
0&0&0\\
0&p & z \sqrt{p (1-p)} \\
0&\overline{z} \sqrt{p (1-p)} & 1-p \\
\end{bmatrix}
\right)
=\begin{bmatrix}
0&0&0\\
0&p & f_{23}(z) \sqrt{p (1-p)} \\
0&\overline{f_{23}(z)} \sqrt{p (1-p)} & 1-p\\
\end{bmatrix}
\]
for every $p\in [0, 1]$ and $z\in S^1$. 

Let $p\in [0, 1]$ and $z\in S^1$.
Using Remark \ref{additional} for the pair of orthogonal rank one projections 
$$\begin{bmatrix}
p & z \sqrt{p (1-p)}  &0\\
\overline{z} \sqrt{p (1-p)} & 1-p & 0\\
0&0&0
\end{bmatrix} \ \ \ {\rm and} \ \ \
 E_{33},$$
we see the following: For every $P\in \cP(\bC^3)$ that is of the form $$\begin{bmatrix}
\lambda p & \lambda z\sqrt{p (1-p)} &*\\
\lambda\overline{z} \sqrt{p (1-p)} & \lambda(1-p) & *\\
*&*&1-\lambda
\end{bmatrix}$$
with $0\leq \lambda\leq 1$, the projection
$\phi(P)$ is of the form $$\begin{bmatrix}
\lambda p & \lambda f_{12}(z)\sqrt{p (1-p)} &*\\
\lambda\overline{f_{12}(z)} \sqrt{p (1-p)} & \lambda(1-p) & *\\
*&*&1-\lambda
\end{bmatrix}.$$

We repeat the same arguments with $E_{11}$ and $E_{22}$ instead of $E_{33}$.
It follows that the three mappings $f_{12}, f_{13}, f_{23}$ satisfy
\[
\phi\left(
\begin{bmatrix}
\lambda_1^2 & z\lambda_1\lambda_2 & w\lambda_1\lambda_3\\
\overline{z} \lambda_1\lambda_2 & \lambda_2^2 & \overline{z}w\lambda_2\lambda_3\\
\overline{w} \lambda_1\lambda_3&z\overline{w} \lambda_2\lambda_3&\lambda_3^2
\end{bmatrix}
\right)
=\begin{bmatrix}
\lambda_1^2 & f_{12}(z)\lambda_1\lambda_2 & f_{13}(w)\lambda_1\lambda_3\\
\overline{f_{12}(z)} \lambda_1\lambda_2 & \lambda_2^2 & f_{23}(\overline{z}w)\lambda_2\lambda_3\\
\overline{f_{13}(w)} \lambda_1\lambda_3&\overline{f_{23}(\overline{z}w)} \lambda_2\lambda_3&\lambda_3^2
\end{bmatrix}
\]
for any nonnegative real numbers $\lambda_1,\lambda_2, \lambda_3$ with $\lambda_1^2+\lambda_2^2+\lambda_3^2=1$ and $z, w\in S^1$. 
Since the right-hand side belongs to $\cP(\bC^3)$, we obtain 
$$f_{12}(z)f_{23}(\overline{z}w)=f_{13}(w),\ \ \ z,w\in S^1.$$
By substituting $1$ for $z$, we see that $f_{12}(1)f_{23}(w)=f_{13}(w)$, and by substituting $1$ for $w$, we see that $f_{12}(z)f_{23}(\overline{z})=f_{13}(1)$. 
It follows that $f_{13}(1)\overline{f_{23}(\overline{z})} f_{23}(\overline{z}w)=f_{12}(1)f_{23}(w)$.
Thus we obtain $g(zw)= g(z)g(w)$ for every pair $z,w\in S^1$, where $g(z):=\overline{f_{13}(1)}f_{12}(1)f_{23}(z)$, $z\in S^1$. 
In other words, $g\colon S^1\to S^1$ is a group homomorphism.

Since $f_{23}$ is nonexpansive, so is $g$. 
Therefore, Corollary \ref{homo} implies that one of the three possibilities $g(z)=z$ for every $z\in S^1$, $g(z)=\overline{z}$ for every $z\in S^1$, or $g(S^1)=\{1\}$ holds true. 
It is now straightforward to conclude that 
\begin{itemize}
\item $\phi$ is a Wigner symmetry induced by a diagonal unitary matrix whose $(1,1)$-entry is $1$ if one of the first two options holds, and
\item there is a diagonal unitary matrix $U$ whose $(1,1)$-entry is $1$ such that $\phi([p_{ij}]_{1 \le i,j \le 3})=U[\, |p_{ij}| \, ]_{1 \le i,j \le 3}U^*$ for every $P\in \cP(\bC^3)$ if the third option holds.
\end{itemize}
This completes the proof in the special case when $n=3$.

Now let us consider the general case by induction on $n$.
We assume that the theorem holds for $n-1$, $n\ge 4$.
By Lemma \ref{inclusion} and the induction hypothesis we can assume with no loss of generality that either
\begin{equation}\label{united}
\phi \left( \left[ \begin{matrix} P & 0 \cr 0 & 0 \cr \end{matrix} \right] \right) = \left[ \begin{matrix} P & 0 \cr 0 & 0 \cr \end{matrix} \right], \ \ \ P \in {\cal P}(\mathbb{C}^{n-1}) ,
\end{equation}
or 
\begin{equation}\label{phi1}
\phi \left( \left[ \begin{matrix} P & 0 \cr 0 & 0 \cr \end{matrix} \right] \right) = \left[ \begin{matrix} \Phi (P) & 0 \cr 0 & 0 \cr \end{matrix} \right], \ \ \ P \in {\cal P}(\mathbb{C}^{n-1}) . 
\end{equation}
Here we use the same symbol $\Phi$ for two different maps, one on  $\cP (\mathbb{C}^{n-1})$ and the other one on $\cP (\mathbb{C}^{n})$, both of them sending any projection $P$ of rank one to a rank one projection $Q$ whose entries are the absolute values of the corresponding entries of $P$.

Applying Lemma \ref{inclusion} and the induction hypothesis once again, we see that there is a diagonal $(n-1) \times (n-1)$ unitary matrix $U$ whose $(1,1)$-entry is $1$ such that one of the following three possibilities occurs: 
\begin{equation}\label{migi}
\phi \left( \left[ \begin{matrix} 0 & 0 \cr 0 & P \cr \end{matrix} \right] \right) = \left[ \begin{matrix} 0 & 0 \cr 0 & UPU^\ast \cr \end{matrix} \right]  , \ \ \ P \in {\cal P}(\mathbb{C}^{n-1}) ,
\end{equation}
$$\phi \left( \left[ \begin{matrix} 0 & 0 \cr 0 & P \cr \end{matrix} \right] \right) = \left[ \begin{matrix} 0 & 0 \cr 0 & UP^tU^\ast \cr \end{matrix} \right]  , \ \ \ P \in {\cal P}(\mathbb{C}^{n-1}) ,$$
or 
\begin{equation}\label{phi2}
\phi \left( \left[ \begin{matrix} 0 & 0 \cr 0 & P \cr \end{matrix} \right] \right) = \left[ \begin{matrix} 0 & 0 \cr 0 & U\Phi(P)U^\ast \cr \end{matrix} \right]  , \ \ \ P \in {\cal P}(\mathbb{C}^{n-1}) .
\end{equation}
By considering projections of the form 
$$\left[ \begin{matrix} 0 & 0 &0 \cr 0 & P&0 \cr 0&0&0 \cr \end{matrix} \right]$$
for $P\in  {\cal P}(\mathbb{C}^{n-2})$,
we see that \eqref{united} implies \eqref{migi}, and that \eqref{phi1} implies \eqref{phi2}.
Moreover, in both cases all diagonal entries but the $(n-1, n-1)$-entry of $U$ are $1$.
Thus, by considering the mapping  
$$P\mapsto \left[ \begin{matrix} 1 & 0 \cr 0 & U^* \cr \end{matrix} \right]\phi(P)\left[ \begin{matrix} 1 & 0 \cr 0 & U \cr \end{matrix} \right] $$
instead of $\phi$, we may additionally assume $U=I$.

Let $P =[p_{ij}]_{1 \le i,j \le n}\in  \cP (\mathbb{C}^{n})$ be any rank one projection. 
Then one may take a projection $Q\in \cP (\mathbb{C}^{n-1})$ such that 
$$P=\left[ \begin{matrix} (1-p_{nn}) Q&* \cr *&p_{nn} \cr \end{matrix} \right] .
$$
If \eqref{united} holds, then we have 
$$
\phi \left( \left[ \begin{matrix} Q & 0 \cr 0 & 0 \cr \end{matrix} \right] \right) = \left[ \begin{matrix} Q & 0 \cr 0 & 0 \cr \end{matrix} \right].
$$
Thus Remark \ref{additional} together with the assumption $\phi(E_{nn})=E_{nn}$ shows that 
\begin{equation}\label{ichi}
\phi (P) = 
\phi\left(\left[ \begin{matrix} (1-p_{nn}) Q&* \cr *&p_{nn} \cr \end{matrix} \right]\right)
= \left[ \begin{matrix}  [p_{ij} ]_{1 \le i,j \le n-1} & * \cr * & p_{nn} \cr \end{matrix} \right] .
\end{equation}
Similarly, we obtain
\begin{equation}\label{ni}
\phi (P) = \left[ \begin{matrix}  p_{11} & * \cr * & [p_{ij} ]_{2 \le i,j \le n} \cr \end{matrix} \right]
\end{equation}
because $\phi(E_{11})=E_{11}$ and \eqref{migi} holds with $U=I$. 
Since $\phi(P)$ is a projection of rank one, \eqref{ichi} and \eqref{ni} imply that $\phi(P)=P$ holds for every $P \in \cP (\mathbb{C}^{n})$ with $P\not\le E_{11}+E_{nn}$. 
By the continuity of $\phi$, we see that $\phi(P)=P$ holds for every $P\in \cP (\mathbb{C}^{n})$.

In a parallel manner, if \eqref{phi1} holds, then we have 
\[
\phi (P) = \left[ \begin{matrix}  [\, |p_{ij}|\, ]_{1 \le i,j \le n-1} & * \cr * & p_{nn} \cr \end{matrix} \right] =\left[ \begin{matrix}  p_{11} & * \cr * & [\, |p_{ij}|\, ]_{2 \le i,j \le n} \cr \end{matrix} \right]
\]
for every $P =[p_{ij}]_{1 \le i,j \le n}\in  \cP (\mathbb{C}^{n})$.
This leads to $\phi(P)=\Phi(P)$ for every $P\in \cP(\bC^n)$.
\end{proof}

\begin{remark}\label{antiunitary}
The following facts can be verified easily. 

The unitary matrix $U$ in the statement of Proposition \ref{diagonal} is unique.
It cannot happen that two different possibilities in the statement of Proposition \ref{diagonal} are fulfilled simultaneously.

Let $U$ be a diagonal $n \times n$ unitary matrix. 
Define $V\colon \bC^n \to \bC^n$ by $Vx = U\overline{x}$, $x\in \bC^n$, where $\overline{x}=(\overline{x_i})_{1\leq i\leq n}$ for $x=(x_i)_{1\leq i\leq n}\in \bC^n$. 
Then $V$ is an antiunitary operator satisfying 
$$U P^t U^\ast =VPV^\ast, \ \ \ P \in {\cal P}(\mathbb{C}^n).$$
\end{remark}

Let  $\phi : {\cal P} (H) \to {\cal P} (H)$ be a nonexpansive map.
Assume that there is an OSP $\{ Q_{\alpha} \, : \, \alpha \in {\cal J} \} \subset {\cal P}(H)$ such that $\{ P_{\alpha} \, : \, \alpha \in {\cal J} \} \subset {\cal P}(H)$ is a COSP, where $P_{\alpha}=\phi(Q_\alpha)$, $\alpha\in {\cal J}$. 
Corollary \ref{image} implies that this assumption is fulfilled if and only if some COSP
is contained in the image of $\phi$. 
Note that the latter condition is much weaker than the surjectivity of $\phi$.

We take a unit vector $e_\alpha\in P_\alpha H$ for each $\alpha\in {\cal J}$.  
Then $\{e_\alpha\,:\, \alpha
\in {\cal J}\}$ is an orthonormal basis of $H$. A unitary operator $V : H \to H$ is said to be diagonal (with respect to the orthonormal basis $\{e_\alpha\,:\, \alpha
\in {\cal J}\}$) if there exist complex numbers  $\{z_\alpha\,:\, \alpha
\in {\cal J}\} \subset S^1$ such that $Ve_\alpha = z_\alpha e_\alpha$ for every $\alpha \in {\cal J}$.
For each projection $P\in \cP(H)$ onto $\bC(\sum_{\alpha\in {\cal J}}c_\alpha e_\alpha)$, $\Phi(P)$ stands for the projection onto $\bC(\sum_{\alpha\in {\cal J}}|c_\alpha|e_\alpha)$.
For a closed subspace $K\subset H$, we identify $\cP(K)$ with a subset of $\cP(H)$ in a natural manner.

\begin{proposition}\label{Phicor}
Let $\dim H\geq 3$ and let
$\phi : {\cal P} (H) \to {\cal P} (H)$ be a map such that 
$$
\| \phi (P) - \phi (Q) \| \le \| P - Q \|
$$
for every pair $P,Q \in {\cal P}(H)$. 
Let $\{ Q_{\alpha} \, : \, \alpha \in {\cal J} \} \subset {\cal P}(H)$ be an OSP such that $\{ P_{\alpha} \, : \, \alpha \in {\cal J} \} \subset {\cal P}(H)$ is a COSP, where $P_{\alpha}=\phi(Q_\alpha)$, $\alpha\in {\cal J}$. 
Let $K:=(\bigvee_{\alpha\in {\cal J}} Q_\alpha)H$ and let $\Phi$ be as above.
Then one of the following holds: 
\begin{itemize}
\item $K=H$ and $\phi$ is a Wigner symmetry.
\item There are a bijective linear isometry $U\colon K\to H$ and a diagonal unitary operator $V : H \to H$ such that $\phi(P)= V \Phi(UPU^*) V^\ast$ for all $P\in \cP(K)$. 
\end{itemize}
\end{proposition}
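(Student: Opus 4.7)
The plan is to reduce to the finite-dimensional Proposition \ref{diagonal} by restricting $\phi$ to the coordinate subsystems spanned by finite pieces of the given OSP, and then to glue the resulting local data into a global description on $\cP(K)$. Choose unit vectors $q_\alpha\in Q_\alpha H$ and $e_\alpha\in P_\alpha H$, so that $\{q_\alpha\}_{\alpha\in\mathcal{J}}$ is an orthonormal basis of $K$ and $\{e_\alpha\}_{\alpha\in\mathcal{J}}$ is an orthonormal basis of $H$. For each finite $F\subset \mathcal{J}$ write $K_F=\mathrm{span}\{q_\alpha:\alpha\in F\}$ and $H_F=\mathrm{span}\{e_\alpha:\alpha\in F\}$, and identify both $\cP(K_F)$ and $\cP(H_F)$ with $\cP(\bC^{|F|})$ using the chosen bases under a fixed ordering of $F$. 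By Lemma \ref{inclusion}, $\phi$ restricts to a nonexpansive map $\cP(K_F)\to \cP(H_F)$ sending each $E_{ii}$ to itself. When $|F|\ge 3$, Proposition \ref{diagonal} gives a diagonal unitary $U_F$ and an unambiguous \emph{type} for this restriction, namely Wigner-linear (case (a)), Wigner-conjugate-linear (case (b)), or $\Phi$-type (case (c)).

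For any two 3-element subsets $F_1,F_2\subset \mathcal{J}$, applying Proposition \ref{diagonal} to the finite union $F_1\cup F_2$ shows that all of its 3-element subsystems share a single type, so this type is constant across all finite $F\subset\mathcal{J}$ with $|F|\ge 3$. Fix $\alpha_0\in \mathcal{J}$ and normalise each $U_F$ (for $F\ni \alpha_0$) to have $\alpha_0$-diagonal entry $1$; the uniqueness in Remark \ref{antiunitary} then forces $U_{F_2}|_{H_{F_1}}=U_{F_1}$ whenever $\alpha_0\in F_1\subset F_2$, so the diagonal entries $z_\alpha\in S^1$ patch together into a diagonal unitary $V$ on $H$ with $Ve_\alpha=z_\alpha e_\alpha$. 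With $U\colon K\to H$ the bijective linear isometry $q_\alpha\mapsto e_\alpha$, the finite-dimensional formula becomes
\[
\phi(P)=V(UPU^*)V^*,\ \ V(UPU^*)^tV^*,\ \ \text{or}\ \ V\Phi(UPU^*)V^*
\]
on each $\cP(K_F)$, according to the type. Given $v\in K$, truncating and renormalising yields $v_n\in K_{F_n}$ with $v_n\to v$ in norm and hence $P_{v_n}\to P_v$; since $\phi$ is $1$-Lipschitz and each of the three formulas is norm-continuous (for $\Phi$ this uses $\bigl||c|-|c'|\bigr|\le |c-c'|$ coordinatewise), the corresponding global formula extends to all of $\cP(K)$.

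It remains to show that in cases (a) and (b) one must actually have $K=H$, which is the main technical point. Set $\tilde U\colon K\to H$ equal to $VU$ in case (a) and $v\mapsto V\overline{Uv}$ in case (b); then $\tilde U$ is a bijective linear (respectively conjugate-linear) isometry and $\phi(P)=\tilde UP\tilde U^*$ on $\cP(K)$. Suppose for contradiction that $u\in K^\perp$ is a unit vector; pick distinct $\alpha_1,\alpha_2\in\mathcal{J}$, set $v_j=(q_{\alpha_j}+u)/\sqrt 2$ and write $\phi(P_{v_j})=P_{x_j}$. Applying nonexpansiveness between $P_{v_j}$ and $P_w$ for $w=(q_{\alpha_j}+c\,q_{\alpha'})/\sqrt{1+|c|^2}\in \cP(K)$ (with $\alpha'\ne \alpha_j$ and $c\in\bC$), together with the known value $\phi(P_w)=P_{\tilde U w}$, gives after a short computation
\[
\bigl|\langle x_j,\tilde Uq_{\alpha_j}\rangle + c'\langle x_j,\tilde Uq_{\alpha'}\rangle\bigr|^2\ge \tfrac12\ \ \text{for every } c'\in\bC,
\]
where $c'=\bar c$ in case (a) and $c'=c$ in case (b). If $\langle x_j,\tilde Uq_{\alpha'}\rangle\ne 0$, choosing $c'=-\langle x_j,\tilde Uq_{\alpha_j}\rangle/\langle x_j,\tilde Uq_{\alpha'}\rangle$ gives $0\ge 1/2$; hence $\langle x_j,\tilde Uq_{\alpha'}\rangle=0$ for every $\alpha'\ne\alpha_j$, and since $\{\tilde Uq_\alpha\}$ is an orthonormal basis of $H$ we conclude $P_{x_j}=P_{\tilde Uq_{\alpha_j}}$. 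But then $\|\phi(P_{v_1})-\phi(P_{v_2})\|^2=1>3/4=\|P_{v_1}-P_{v_2}\|^2$, contradicting nonexpansiveness. Therefore $K=H$ and $\tilde U\colon H\to H$ is unitary or antiunitary, implementing a Wigner symmetry. The main obstacle is precisely this last argument: naive comparisons between $\cP(K)$-projections and $P_u$ alone yield no useful bound, so one has to engineer ``mixed'' test projections straddling $K$ and $K^\perp$ in order to pin down $\phi(P_{v_j})$ and produce the contradiction.
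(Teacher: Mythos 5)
Your proposal is correct, and its first two-thirds coincide with the paper's argument: cutting $\phi$ down to finite coordinate blocks via Lemma \ref{inclusion}, applying Proposition \ref{diagonal}, using the uniqueness and mutual exclusivity from Remark \ref{antiunitary} to make the type and the normalized diagonal unitaries consistent across finite subsets, and extending to all of $\cP(K)$ by density and continuity. You genuinely diverge only in the endgame, where one must rule out $K\subsetneq H$ in the unitary/antiunitary case $\phi(P)=\tilde U P\tilde U^\ast$ on $\cP(K)$. The paper argues as follows: since $\tilde U\colon K\to H$ is bijective, $\phi|_{\cP(K)}$ is a bijection onto $\cP(H)$, so for unit vectors $v\in K$, $w\in K^\perp$ and $0<c\le 1$ the value $\phi(P_{cv+\sqrt{1-c^2}\,w})$ equals $\phi(Q)$ for a unique $Q\in\cP(K)$; because nonexpansive maps reflect orthogonality (images at distance $1$ force preimages at distance $1$), every unit $u\in K$ with $Qu=0$ is orthogonal to $v$, whence $Q=P_v$; letting $c\to 0$ then makes $\phi$ discontinuous at every point of $\cP(K^\perp)$, contradicting continuity of a nonexpansive map. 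Your argument instead pins down $\phi$ at the specific mixed vectors $v_j=(q_{\alpha_j}+u)/\sqrt 2$ by testing nonexpansiveness against the one-parameter family $P_w$, $w=(q_{\alpha_j}+cq_{\alpha'})/\sqrt{1+|c|^2}$, and choosing the parameter to annihilate the relevant inner product, which forces $\phi(P_{v_j})=P_{\tilde U q_{\alpha_j}}$; two such points then give the single quantitative contradiction $1=\|\phi(P_{v_1})-\phi(P_{v_2})\|^2>3/4=\|P_{v_1}-P_{v_2}\|^2$. Your route needs neither the surjectivity of $\phi|_{\cP(K)}$ onto $\cP(H)$ nor any limiting argument, and replaces the topological contradiction by an explicit two-point violation of the Lipschitz bound; the paper's route proves a stronger intermediate fact (every mixed projection is collapsed onto $\phi(P_v)$, for all $v$, $w$, $c$) with orthogonality reflection doing the work in place of your computation. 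Both hinge on the same structural input, namely that $\{\tilde U q_\alpha\}$ is an orthonormal basis of all of $H$, i.e.\ on the completeness of the COSP $\{P_\alpha\}$.
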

\begin{proof} 
Let $\{ f_\alpha \, : \, \alpha \in {\cal J} \}$ be an orthonormal basis of $K$ such that the image of each $Q_\alpha$ is spanned by $f_\alpha$ and $\{ e_\alpha \, : \, \alpha \in {\cal J} \}$ an orthonormal basis of $H$ such that the image of each $P_\alpha$ is spanned by $e_\alpha$. 
We first define the linear bijective isometry $U : K \to H$ by $U f_\alpha = e_\alpha$, $\alpha \in {\cal J}$.
Observe that $UQ_\alpha U^*= P_{\alpha}=\phi(Q_\alpha)$ holds for every $\alpha\in {\cal J}$.  
We choose and fix distinct elements $\alpha_1, \alpha_2, \alpha_3 \in {\cal J}$. Using Proposition \ref{diagonal} and Lemma \ref{inclusion} (see also Remark \ref{antiunitary}) we see that for every finite subset ${\cal K} \subset {\cal J }$ satisfying $\alpha_1, \alpha_2, \alpha_3 \in {\cal K}$ either 
\begin{itemize}
\item there exists a uniquely determined unitary or antiunitary operator $V_{\cal K} : {\rm span}\, \{ e_\alpha \, : \, \alpha \in {\cal K} \}  \to {\rm span}\, \{ e_\alpha \, : \, \alpha \in {\cal K} \}$ such that $V_{\cal K} {e_{\alpha_1}} = e_{\alpha_1}$ and $\phi (Q) = V_{\cal K} UQU^\ast V_{\cal K}^\ast$ for every $Q\in \cP(K)$ with
$Q \le \sum_{\alpha \in {\cal K}} Q_\alpha$; or
\item  there exists a uniquely determined 
unitary operator $V_{\cal K} : {\rm span}\, \{ e_\alpha \, : \, \alpha \in {\cal K} \}  \to {\rm span}\, \{ e_\alpha \, : \, \alpha \in {\cal K} \}$ such that $V_{\cal K} {e_{\alpha_1}} = e_{\alpha_1}$ and $\phi (Q) = V_{\cal K} \Phi (UQU^\ast) V_{\cal K}^\ast$ for every $Q\in \cP(K)$ with
$Q \le \sum_{\alpha \in {\cal K}} Q_\alpha$.
\end{itemize}
Obviously, if ${\cal K}_1 , {\cal K}_2 \subset {\cal J}$ are finite subsets satisfying $\alpha_1, \alpha_2, \alpha_3\in {\cal K}_1 \subset {\cal K}_2$ then either we have for both sets ${\cal K}_1$ and ${\cal K}_2$ the first possibility above, or we have for both sets the second possibility. In both cases
$V_{{\cal K}_1}$ is the restriction of the operator $V_{{\cal K}_2}$.
It follows that either
\begin{itemize}
\item there exists a bijective linear or conjugate-linear isometry $W :  K \to H$ such that $\phi (P) = WPW^\ast$ for every $P \in {\cal P}(K)$ that is dominated by a sum of finitely many projections in $\{Q_{\alpha}\,:\, \alpha\in {\cal J}\}$; or
\item  there exists a diagonal unitary operator $V: H \to H$ such that $\phi (P) = V \Phi (UPU^\ast) V^\ast$ for every $P \in {\cal P}(K)$ that is dominated by a sum of finitely many projections in $\{Q_{\alpha}\,:\, \alpha\in {\cal J}\}$.
\end{itemize}
By the continuity of $\phi$ we see that one of the above possibilities holds for every $P \in {\cal P}(K)$.

Assume that the first option holds and $K\neq H$. 
Let $v\in K$ and $w\in K^{\perp}$ be unit vectors and $0<c\leq 1$ a real number.
Then we may take a unique projection $Q\in \cP(K)$ such that $\phi(P_{cv+\sqrt{1-c^2}w}) = \phi(Q)$.
For every unit vector $u\in K$ with $Qu=0$, the projection $\phi(P_{u})=WP_u W^*$ is orthogonal to $WQW^*=\phi(Q)=\phi(P_{cv+\sqrt{1-c^2}w})$.
Since $\phi$ is nonexpansive, we see that $P_{u}$ is orthogonal to $P_{cv+\sqrt{1-c^2}w}$. 
Hence $u$ is orthogonal to $v$. Because this is true for every $u \in K$ satisfying $Qu=0$ we conclude
that $Q=P_v$, thus $\phi(P_{cv+\sqrt{1-c^2}w}) = \phi(P_v)$.

Since $P_{cv+\sqrt{1-c^2}w}$ converges to $P_w$ when $c$ tends to zero and since $v\in K$ and $w\in K^{\perp}$ are arbitrary unit vectors 
we see that $\phi$ is not continuous at every point of $\cP(K^\perp)$. 
This contradicts our assumption, and we obtain $K=H$. 
Thus we have shown that $\phi$ is a Wigner symmetry.
\end{proof}

\begin{proof}[Proof of Theorem \ref{Phiinfin}]
In the formulation of Theorem \ref{Phiinfin} the orthonormal basis $\{ e_\alpha\, : \, \alpha \in {\cal J} \}$ is fixed in advance. In Proposition \ref{Phicor} the orthonormal basis $\{ e_\alpha\, : \, \alpha \in {\cal J} \}$ was determined using a given CONS. Thus, in order to show that Theorem \ref{Phiinfin} follows from Proposition \ref{Phicor} we only need to check that if $\{ e_\alpha\, : \, \alpha \in {\cal J} \}$ and $\{ e'_\alpha\, : \, \beta \in {\cal J}' \}$ are two orthonormal bases of $H$ and $\Phi$ and $\Phi'$ are the corresponding nonexpansive maps, then there exists a unitary operator $W : H \to H$ such that $W\Phi(P) W^\ast = \Phi' (WPW^\ast)$, $P \in {\cal P}(H)$. There exists a bijection $\tau : {\cal J} \to {\cal J}'$. It is straightforward to verify that the unitary operator $W$ given by $We_{\alpha} = e'_{\tau (\alpha)}$, $\alpha \in {\cal J}$, has the desired property.
\end{proof}

\begin{proof}[Proof of Theorem \ref{plasticity}] 
If $\dim H=2$ then the second item of Corollary \ref{known} leads to the desired conclusion.
In what follows we assume that $\dim H\geq 3$. 
The surjectivity of $\phi$ together with Corollary \ref{image} implies that $\phi$ satisfies the assumption of Proposition \ref{Phicor}.

Assume that $\phi$ is of the form as in the second option of the statement of Proposition \ref{Phicor}.
The goal of this proof is to show that $\phi$ is never surjective. 
From now on we use the same symbols as in Proposition \ref{Phicor}. 
After replacing the map $\phi$ by $P \mapsto V^\ast \phi (P) V$, $P \in \cP (H)$, we may assume that $V=I$.
Fix three indices $\alpha_1, \alpha_2, \alpha_3\in {\cal J}$.
Fix any complex numbers $c_1, c_2, c_3\in \bC$ of modulus one such that $0\in \bC$ is contained in the interior of the convex hull of $\{c_1, c_2, c_3\}$. For example, we may choose the cube roots of $1$.

We can find  $\varepsilon >0$ such that $D(0,\varepsilon) = \{ z \in \mathbb{C} \, : \, |z| < \varepsilon \}$ is contained in the convex hull of $\{c_1, c_2, c_3\}$. Then there exists $\delta >0$ such that $D(0, \varepsilon /2 )$ is 
contained in the convex hull of $\{c'_1, c'_2, c'_3\}$ for all complex numbers $c'_1, c'_2, c'_3$ satisfying $| c'_j - c_j | < \delta$.

Let $P\in \cP(H)$ denote the projection onto $\bC(c_1e_{\alpha_1}+c_2e_{\alpha_2}+c_3e_{\alpha_3})$. 
We can find $\eta>0$ such that if
$P'\in \cP(H)$ satisfies $\|P-P'\|<\eta$, then we may take $c'_1, c'_2, c'_3 \in \bC$ and $v\in \{e_{\alpha_1} , e_{\alpha_2}, e_{\alpha_3} \}^\perp$ such that $P'$ is the projection onto $\bC(c_1'e_{\alpha_1}+c_2'e_{\alpha_2}+c_3'e_{\alpha_3}+v)$ and
$$
| c'_j - c_j | < \delta \ \ \ {\rm and} \ \ \ \|v \| < {\varepsilon \over 2}.
$$

Let $P'\in \cP(H)$ be a projection satisfying $\|P-P'\|<\eta$ and let $c'_1, c'_2, c'_3 \in \bC$ and $v\in \{e_{\alpha_1} , e_{\alpha_2}, e_{\alpha_3} \}^\perp$ be as above.
Let further $\{a_\alpha\,:\, \alpha \in {\cal J}\setminus\{\alpha_1,\alpha_2, \alpha_3\}\}$ be any family of nonnegative real numbers such that $\sum_{\alpha\in {\cal J}\setminus\{\alpha_1,\alpha_2, \alpha_3\}} a_{\alpha}^2< 1$. 
Then
$$\left| \left\langle v, \sum_{\alpha\in {\cal J}\setminus\{\alpha_1,\alpha_2, \alpha_3\}} a_{\alpha}e_\alpha \right\rangle \right| < {\varepsilon \over 2} .$$
Thus, we can find $a_{\alpha_1}, a_{\alpha_2}, a_{\alpha_3}>0$ such that $a_{\alpha_1}+ a_{\alpha_2}+ a_{\alpha_3}=1$ and 
\[
a_{\alpha_1}c_1'+a_{\alpha_2}c_2'+a_{\alpha_3}c_3'=- \left\langle v, \sum_{\alpha\in {\cal J}\setminus\{\alpha_1,\alpha_2, \alpha_3\}} a_{\alpha}e_\alpha \right\rangle
\]
hold. 
It follows that the projection $E$ onto $\bC\left(\sum_{\alpha\in {\cal J}} a_{\alpha}e_\alpha\right)$ is orthogonal to $P'$. 
Observe that the following holds: 
For any family of complex numbers $\{b_\alpha\,:\, \alpha\in {\cal J}\}$, with $|b_\alpha|=a_\alpha$ for all $\alpha\in {\cal J}$, the projection $F$ onto $\bC\left(\sum_{\alpha\in {\cal J}} b_{\alpha}f_\alpha\right)$ satisfies $\phi(F)=E$. 
By the assumption that $\phi$ is nonexpansive, we see that such an $F$ is orthogonal to any projection in $\phi^{-1}(P')$. 

The arbitrariness of $\{a_\alpha\,:\, \alpha \in {\cal J}\setminus\{\alpha_1,\alpha_2, \alpha_3\}\}$ and $\{b_\alpha\,:\, \alpha \in {\cal J}\}$ ensures that the one-dimensional image of any projection in $\phi^{-1}(P')$ is actually orthogonal to $K$. 
It follows that $\phi^{-1}(\{P'\in \cP(H)\,:\, \|P-P'\|<\eta\})\subset \cP(K^{\perp})$. 
The continuity of $\phi$ implies that the left-hand side is open in $\cP(H)$. 
However, the interior of $\cP(K^{\perp})$ is empty in $\cP(H)$. 
Therefore, we obtain $\phi^{-1}(\{P'\in \cP(H)\,:\, \|P-P'\|<\eta\})=\emptyset$. 
In particular, $\phi$ is not surjective.
\end{proof}

\begin{proof}[Proof of Theorem \ref{plasticity3}]
Obviously, $\phi$ is bijective.  By Theorem \ref{plasticity}, the inverse of $\phi$ is a Wigner symmetry. Therefore, $\phi$ is a Wigner symmetry.
\end{proof}

\section{Optimality of results}

The surjectivity assumption is essential in Theorem \ref{plasticity3}.  
\begin{example}\label{ce1}
Assume $\dim H=\infty$. 
Then there is a noncontractive map $\phi :  {\cal P}(H) \to {\cal P}(H)$ such that no linear or conjugate-linear isometry $U : H \to H$ satisfies $\phi (P) = UPU^\ast$ for every $P \in {\cal P}(H)$.
\end{example}
\begin{proof}
Note that an infinite-dimensional Hilbert space $H$ can be identified with the orthogonal direct sum of two copies of $H$, $H \equiv H \oplus H$. Then a map $\phi$ on ${\cal P}(H)$ can be considered as a map from ${\cal P}(H)$ to ${\cal P}(H \oplus H)$. We represent operators in  ${\cal P}(H \oplus H)$ with $2 \times 2$ operator matrices. Choose and fix $\emptyset \neq \mathcal{A} \subsetneq {\cal P}(H)$. It is trivial to verify that the map $\phi :  {\cal P}(H) \to {\cal P}(H \oplus H)$ given by
$$
\phi (P) =\begin{bmatrix} P & 0 \\ 0 & 0  \end{bmatrix}, \ \ \ P \in {\cal P}(H) \setminus \mathcal{A},
$$
and
$$
\phi (P) = \begin{bmatrix} 0 & 0 \\ 0 & P \end{bmatrix}, \ \ \ P \in \mathcal{A}
$$
is noncontractive but there is no linear or conjugate-linear isometry $U : H \to H \oplus H$ such that
$\phi (P) = UPU^\ast$ for every $P \in {\cal P}(H)$.
\end{proof}

We next give an example of an injective nonexpansive map. 
\begin{example}\label{ef}
Let $H$ be an infinite-dimensional separable Hilbert space.
There is an injective nonexpansive map $\phi : {\cal P} (H) \to {\cal P} (H)$ such that no linear or conjugate-linear isometry $U : H \to H$ satisfies $\phi (P) = UPU^\ast$ for every $P \in {\cal P}(H)$.
\end{example}
\begin{proof}
We will identify ${\cal P}(H)$ with the projective space $\mathbb{P} (H) = \{ [x] \, : \, x \in H \setminus \{ 0 \} \}$ over the Hilbert space $H$. Here, $[x]$ denotes the one-dimensional subspace of $H$ spanned by $x$. Of course, we identify a projection of rank one with its image. 
By Lemma \ref{cos}, we need to find an injective map $\phi : \mathbb{P} (H) \to \mathbb{P} (H)$ with the property that for any unit vectors $x,y \in H$ and for any unit vectors $u \in \phi ([x])$ and $v \in \phi ([y])$ we have
\begin{equation}\label{ip}
| \langle u,v \rangle | \ge | \langle x,y \rangle |,
\end{equation}
but there is no linear or conjugate-linear isometry $U : H \to H$ satisfying
$\phi ([x]) = [Ux]$ for every unit vector $x\in H$.

We choose and fix a sequence $(P_n) \in {\cal P}(H)$ with the property that the set $\{ P_n \, : \, n \in \mathbb{N} \}$ is dense in ${\cal P}(H)$. Let $(x_n)  \subset H$  be a sequence of unit vectors such that the image of $P_n$ is the linear span of $x_n$, $n=1,2,\ldots$. We next choose and fix an orthonormal set
$\{ e_1 , f_1 , e_2 , f_2 , \ldots \}$ in $H$. For any unit vector $x\in H$ we set $| \langle x, x_n \rangle | = t_n$, $n=1,2,\ldots$, and define
$$
\phi ([x]) = \left[ \sum_{n=1}^\infty { t_n \over 2^{n/2}} e_n +  \sum_{n=1}^\infty { \sqrt{ 1 - t_{n}^2} \over 2^{n/2}} f_n \right].
$$
It is trivial to see that $\phi$ is well-defined. 

Let $x,y \in H$ be any unit vectors. Set $| \langle x, x_n \rangle | = t_n$ and  $| \langle y, x_n \rangle | = s_n$, $n=1,2,\ldots$. Denote
$$
u = \sum_{n=1}^\infty { t_n \over 2^{n/2}} e_n +  \sum_{n=1}^\infty { \sqrt{ 1 - t_{n}^2} \over 2^{n/2}} f_n
$$
and
$$
v = \sum_{n=1}^\infty { s_n \over 2^{n/2}} e_n +  \sum_{n=1}^\infty { \sqrt{ 1 - s_{n}^2} \over 2^{n/2}} f_n .
$$
A straightforward computation shows that $u$ and $v$ are unit vectors. Clearly, $[u]= \phi ([x])$ and $[v] =\phi ([y])$. We need to show \eqref{ip}. We have
$$
| \langle u,v \rangle | = \sum_{n=1}^\infty {1 \over 2^n} ( t_n s_n + \sqrt{ 1 - t_{n}^2} \, \sqrt{ 1 - s_{n}^2} )
$$
and therefore it is enough to check that
\begin{equation}\label{innerprod}
 | \langle x,y \rangle | \le  t_n s_n + \sqrt{ 1 - t_{n}^2} \, \sqrt{ 1 - s_{n}^2}, \ \ \ n=1,2,\ldots.
\end{equation}
Let $z\in H$ be an arbitrary unit vector. Denote $w_1 = x - \langle x,z \rangle z$ and $w_2  = y - \langle y,z \rangle z$. Then
\[
\begin{split}
| \langle x,y \rangle | &= |  \langle x,z \rangle \overline{  \langle y,z \rangle} +  \langle w_1,w_2 \rangle |\\
&\le  |  \langle x,z \rangle | \,  |  \langle y,z \rangle | + \| w_1 \| \, \| w_2 \|\\
&= |  \langle x,z \rangle | \,  |  \langle y,z \rangle | + \sqrt{ 1 -  |  \langle x,z \rangle |^2} \, \sqrt{ 1 -  |  \langle y,z \rangle |^2}.
\end{split}
\]
Substituting $z= x_n$ we get the desired inequality (\ref{innerprod}).
Moreover, if $[x]\neq [y]$, then the assumption that $\{ P_n \, : \, n \in \mathbb{N} \}$ is dense in ${\cal P}(H)$ implies that $\{x_n, x, y\}$ is linearly independent for some $n$.
For such an $n$ and $z=x_n$, we see that $\{w_1, w_2\}$ is linearly independent. Thus we have strict inequality in (\ref{innerprod}) and \eqref{ip}. 
It follows that there is no linear or conjugate-linear isometry $U : H \to H$ satisfying
$\phi ([x]) = [Ux]$ for every unit vector $x\in H$.

It remains to show that $\phi$ is injective. Here we prefer to work with ${\cal P}(H)$ rather than with $\mathbb{P} (H)$. Thus, let $P$ and $Q$ be rank one projections such that $\phi (P) = \phi (Q)$. Then 
$$
{\rm tr}\, (P P_n) = {\rm tr}\, (Q P_n), \ \ \ n=1,2,\ldots ,
$$
and since $\{ P_n \, : \, n=1,2,\ldots \}$ is dense in ${\cal P}(H)$ we have 
$$
{\rm tr}\, (P R) = {\rm tr}\, (Q R)
$$
for every $R \in {\cal P}(H)$. Equivalently, $\| P -R \| = \| Q -R \|$ for every $R\in {\cal P}(H)$, and therefore, $P=Q$, as desired.
\end{proof}

Our last example will show that when the second possibility in Proposition \ref{Phicor} occurs, it may happen that $K$ is a proper subspace of $H$. Indeed, let $H$ be an infinite-dimensional Hilbert space and $\{ f_\alpha \, : \, \alpha \in {\cal J} \}$ an orthonormal set of vectors such that $K = \overline{ {\rm span}\, \{ f_\alpha \, : \, \alpha \in {\cal J} \} }$ is a proper subspace of $H$. Let further $\{ e_\alpha \, : \, \alpha \in {\cal J} \}$ be an orthonormal basis of $H$ and $U : K \to H$ a bijective linear isometry defined by $U f_\alpha = e_\alpha$, $\alpha \in {\cal J}$.
For each projection $P\in \cP(H)$ onto $\bC(\sum_{\alpha\in {\cal J}}c_\alpha e_\alpha)$, $\Phi(P)$ stands for the projection onto $\bC(\sum_{\alpha\in {\cal J}}|c_\alpha|e_\alpha)$. 
Fix an index $\alpha_0\in {\cal J}$, and for each unit vector $w=\sum_{\alpha\in {\cal J}} b_{\alpha} f_\alpha + v$, $v\in K^{\perp}$, let $\phi(P_w)$ be the projection onto $\bC\left(\sum_{\alpha\in {\cal J}\setminus \{ \alpha_0 \}} |b_{\alpha}|e_\alpha + \sqrt{\|v\|^2+|b_{\alpha_0}|^2}e_{\alpha_0}\right)$. 
We denote $P_{f_{\alpha}} = Q_\alpha$ and $P_{e_{\alpha}} = P_\alpha$, $\alpha \in {\cal J}$. Then clearly, $\phi (Q_\alpha) = P_\alpha$, $\alpha \in {\cal J}$, and $\phi (P) = \Phi (UPU^\ast)$, $P \in {\cal P}(K)$. It remains to show that $\phi$ is a nonexpansive map, that is, we have to verify that
$$
{\rm tr} \, ( \phi (P_w) \phi (P_z)) \ge {\rm tr}\, (P_w P_z)
$$
for every pair of unit vectors $w,z \in H$. Let
$$w=\sum_{\alpha\in {\cal J}} b_{\alpha} f_\alpha + v \ \ \ {\rm and} \ \ \
z=\sum_{\alpha\in {\cal J}} c_{\alpha} f_\alpha + u,
$$
$v,u \in K^\perp$, be such vectors. Applying the Cauchy-Schwarz inequality in the two-dimensional Euclidean space we get
$$
| b_{\alpha_0} | \, | c_{\alpha_0} | + \| v \| \, \| u \| \le \sqrt{ \| v \|^2 + |b_{\alpha_0}|^2 } \  \sqrt{ \| u \|^2 + |c_{\alpha_0}|^2 }.
$$
Therefore,
\[
\begin{split}
\sqrt{ {\rm tr}\, ( P_w P_z ) } &= | \langle w,z \rangle | = \left| \sum_{\alpha \in {\cal J}} b_\alpha \overline{ c_\alpha} + \langle v,u \rangle \right|
\le \sum_{\alpha \in {\cal J}} | b_\alpha | \, | c_\alpha| +  | \langle v,u \rangle |\\
&\le  | b_{\alpha_0} | \, | c_{\alpha_0} | + \| v \| \, \| u \|  +  \sum_{\alpha \in {\cal J}\setminus \{ \alpha_0 \} } | b_\alpha | \, | c_\alpha| \\
&\le \sqrt{ \| v \|^2 + |b_{\alpha_0}|^2 } \  \sqrt{ \| u \|^2 + |c_{\alpha_0}|^2 } \ \ + \sum_{\alpha \in {\cal J}\setminus \{ \alpha_0 \} } | b_\alpha | \, | c_\alpha| \\
&=\sqrt{ {\rm tr}\, ( \phi( P_w ) \phi( P_z )) } .
\end{split}
\]

\end{document}